\documentclass[reqno,10pt,letterpaper]{amsart}

\usepackage{lipsum}
\usepackage{amsmath}
\usepackage{amssymb}
\usepackage{amsthm}
\usepackage{mathrsfs}
\usepackage{accents}
\usepackage{calc}
\usepackage{arydshln}
\usepackage{upgreek}
\usepackage{slashed}
\usepackage{xifthen}
\usepackage{graphicx}
\usepackage{longtable}
\usepackage[inline]{enumitem}
\usepackage{stmaryrd}


\usepackage{xcolor}
\definecolor{winered}{rgb}{0.6,0,0}
\definecolor{lessblue}{rgb}{0,0,0.7}

\usepackage[pdftex,colorlinks=true,linkcolor=winered,citecolor=lessblue,urlcolor=lessblue,breaklinks=true,bookmarksopen=true]{hyperref}

\hyphenation{Schwarz-schild}
\hyphenation{Min-kow-ski}
\hyphenation{Pro-po-si-tion}
\hyphenation{hy-per-sur-face}

\setcounter{tocdepth}{3}
\setcounter{secnumdepth}{3}

\makeatletter
\newcommand{\myitem}[2]{\item[\rm(#2)]\def\@currentlabel{#2}\label{#1}}
\makeatother

\addtolength{\textheight}{0.4in}
\addtolength{\oddsidemargin}{-0.5in}
\addtolength{\evensidemargin}{-0.5in}
\addtolength{\textwidth}{1.0in}
\setlength{\topmargin}{0.00in}
\setlength{\headheight}{0.18in}
\setlength{\marginparwidth}{1.0in}
\setlength{\abovedisplayskip}{0.2in}
\setlength{\belowdisplayskip}{0.2in}
\setlength{\parskip}{0.05in}


\usepackage{titletoc}

\makeatletter

\def\@tocline#1#2#3#4#5#6#7{
\begingroup
  \par
    \parindent\z@ \leftskip#3 \relax \advance\leftskip\@tempdima\relax
                  \rightskip\@pnumwidth plus 4em \parfillskip-\@pnumwidth
    \ifcase #1 
       \vskip 0.6em \hskip 0em 
       \or
       \or \hskip 0em 
       \or \hskip 1em 
    \fi%
    %
    #6
    %
    \nobreak\relax{\leavevmode\leaders\hbox{\,.}\hfill}
    \hbox to\@pnumwidth {\@tocpagenum{#7}}
  \par
\endgroup
}

 \def\l@section{\@tocline{0}{0pt}{0pc}{}{}}

\renewcommand{\tocsection}[3]{%
  \indentlabel{\@ifnotempty{#2}{ 
    \ignorespaces\bfseries{#2. #3}}}
  \indentlabel{\@ifempty{#2}{\ignorespaces\bfseries{#3}}{}} 
    \vspace{1.5pt}}

\renewcommand{\tocsubsection}[3]{%
  \indentlabel{\@ifnotempty{#2}{
    \ignorespaces#2. #3}}
  \indentlabel{\@ifempty{#2}{\ignorespaces #3}{}}
    \vspace{1.5pt}}

\renewcommand{\tocsubsubsection}[3]{%
  \indentlabel{\@ifnotempty{#2}{
    \ignorespaces#2. #3}}
  \indentlabel{\@ifempty{#2}{\ignorespaces #3}{}}
    \vspace{1.5pt}}

\makeatother

\makeatletter
\def\@nomenstarted{0}
\newlength{\@nomenoldtabcolsep}

\newcommand{\nomenstart}
  {%
    \def\@nomenstarted{1}%
    \setlength{\@nomenoldtabcolsep}{\tabcolsep}%
    \setlength{\tabcolsep}{3.5pt}%
    \begin{longtable}{p{0.11\textwidth} p{0.86\textwidth}}
  }

\newcommand{\nomenitem}[2]{%
    \ifcase\@nomenstarted%
      \or 
      \or \\ 
    \fi%
    #1\,{\leavevmode\leaders\hbox{\,.}\hfill} & #2%
    \def\@nomenstarted{2}%
  }%
\newcommand{\nomenend}
  {\\%
      \end{longtable}%
      \setlength{\tabcolsep}{\@nomenoldtabcolsep}%
      \def\@nomenstarted{0}%
  }
\makeatother

\makeatletter
\newcommand{\BIG}{\bBigg@{3.5}}
\newcommand{\vast}{\bBigg@{4}}
\newcommand{\Vast}{\bBigg@{5}}
\newcommand{\VAST}[1]{\bBigg@{#1}}
\makeatother

\allowdisplaybreaks

\numberwithin{equation}{section}
\numberwithin{figure}{section}
\newtheorem{thm}{Theorem}[section]

\newtheorem{lemma}[thm]{Lemma}
\newtheorem{cor}[thm]{Corollary}

\newtheorem*{thm*}{Theorem}
\newtheorem*{prop*}{Proposition}
\newtheorem*{cor*}{Corollary}
\newtheorem*{conj*}{Conjecture}

\theoremstyle{definition}
\newtheorem{definition}[thm]{Definition}

\theoremstyle{remark}
\newtheorem{rmk}[thm]{Remark}

\makeatletter
\newcommand{\fakephantomsection}{%
  \Hy@MakeCurrentHref{\@currenvir.\the\Hy@linkcounter}
  \Hy@raisedlink{\hyper@anchorstart{\@currentHref}\hyper@anchorend}%
  \Hy@GlobalStepCount\Hy@linkcounter%
}
\makeatother


\newcommand{\mc}{\mathcal}

\newcommand{\cB}{\mc B}
\newcommand{\cC}{\mc C}

\newcommand{\cE}{\mc E}

\newcommand{\cH}{\mc H}
\newcommand{\cI}{\mc I}

\newcommand{\cL}{\mc L}
\newcommand{\cM}{\mc M}

\newcommand{\cO}{\mc O}

\newcommand{\cU}{\mc U}
\newcommand{\cV}{\mc V}

\newcommand{\cX}{\mc X}

\newcommand{\ms}{\mathscr}

\newcommand{\sC}{\ms C}

\newcommand{\scri}{\ms I}


\newcommand{\N}{\mathbb{N}}
\newcommand{\R}{\mathbb{R}}
\newcommand{\Z}{\mathbb{Z}}

\newcommand{\Sph}{\mathbb{S}}


\newcommand{\sfG}{\mathsf{G}}
\newcommand{\sfH}{\mathsf{H}}



\newcommand{\fm}{\mathfrak{m}}



\newcommand{\slg}{\slashed{g}{}}






\newcommand{\End}{\operatorname{End}}

\newcommand{\Hom}{\operatorname{Hom}}

\newcommand{\mathspan}{\operatorname{span}}


\newcommand{\eps}{\epsilon}

\newcommand{\la}{\langle}

\newcommand{\pa}{\partial}
\newcommand{\dd}{{\mathrm d}}
\newcommand{\ra}{\rangle}


\newcommand{\ubar}[1]{\underaccent{\bar}#1}
\newcommand{\pfstep}[1]{$\bullet$\ \underline{\textit{#1}}}



\newcommand{\Diff}{\mathrm{Diff}}




\newcommand{\CI}{\cC^\infty}



\newcommand{\bhm}{\fm}

\newcommand{\openbigpmatrix}[1]
  {%
    \def\@bigpmatrixsize{#1}%
    \addtolength{\arraycolsep}{-#1}%
    \begin{pmatrix}%
  }
\newcommand{\closebigpmatrix}
  {%
    \end{pmatrix}%
    \addtolength{\arraycolsep}{\@bigpmatrixsize}%
  }




\newlength{\enummargin}\setlength{\enummargin}{1.5em}

\newcommand{\usref}[1]{{\upshape\ref{#1}}}



\DeclareGraphicsExtensions{.mps}

\makeatletter
\newcommand*{\fwbw}[1]{\expandafter\@fwbw\csname c@#1\endcsname}
\newcommand*{\@fwbw}[1]{\ifcase #1 \or {\rm fw}\or {\rm bw}\fi}
\AddEnumerateCounter{\fwbw}{\@fwbw}
\makeatother

\begin{document}

\title{Horizons of some asymptotically stationary spacetimes}

\date{\today}

\begin{abstract}
  On a class of dynamical spacetimes which are asymptotic as $t\to\infty$ to a stationary spacetime containing a horizon $\cH_0$, we show the existence of a unique null hypersurface $\cH$ which is asymptotic to $\cH_0$. This is a special case of a general unstable manifold theorem for perturbations of flows which translate in time and have a normal sink at an invariant manifold in space. Examples of horizons $\cH_0$ to which our result applies include event horizons of subextremal Kerr and Kerr--Newman black holes as well as event and cosmological horizons of subextremal Kerr--Newman--de~Sitter black holes. In the Kerr(--Newman) case, we show that $\cH$ is equal to the boundary of the black hole region of the dynamical spacetime.
\end{abstract}

\subjclass[2010]{Primary: 37C70. Secondary: 83C57, 53B50}

\author{Peter Hintz}
\address{Department of Mathematics, ETH Z\"urich, R\"amistrasse 101, 8092 Z\"urich, Switzerland}
\email{peter.hintz@math.ethz.ch}

\maketitle

\section{Introduction}
\label{SI}

In this note we give a short proof of (a generalization of) the recent result by Chen--Klainerman \cite{ChenKlainermanHorizon} on the regularity of the future event horizon of perturbations of subextremal Kerr black holes. Underlying our proof is a general unstable manifold theorem based on \cite{HintzPolytrap}. A simple illustrative example, which is a toy model for the time $1$ null-geodesic flow near the event horizon of Kerr in phase space, is as follows. Let $a\in\R$ and consider
\begin{equation}
\label{EqIEx}
  f_0(t,\phi,z) = (t-1,\phi+a-z,z/2),\quad t\in\R,\ \phi\in\Sph^1=\R/2\pi\Z,\ z\in(-1,1).
\end{equation}
This maps $\cH_0:=\R_t\times\Sph^1_\phi\times\{0\}$ into itself. Then we prove that for any perturbation $f(t,\phi,z)=f_0(t,\phi,z)+(0,\cO(t^{-\alpha}),\cO(t^{-\alpha}))$, $\alpha>0$, there exist $t_0\geq 1$ and a unique smooth submanifold $\cH\subset[t_0,\infty)\times\Sph^1\times(-1,1)$ which is $\cO(t^{-\alpha})$-close to $\cH_0$ and so that $f(\cH)\cap\{t\geq t_0\}\subset\cH$.

We set up and prove our general unstable manifold type theorem in~\S\ref{SU}. In~\S\ref{SH}, we put the construction of perturbed horizons into this setup. In essence, we construct a smooth submanifold $\Gamma^{\rm u}$ \emph{of phase space} which asymptotes at late times to the conormal bundle $\Gamma_0$ (projected to the cosphere bundle) of the unperturbed event horizon. The key dynamical input is that the past geodesic flow on subextremal Kerr (and its generalizations), lifted to phase space, is a sink in directions normal to $\Gamma_0$. We then show that $\Gamma^{\rm u}$ projects to the base as (and is the conormal bundle of) a smooth null hypersurface (Theorem~\ref{ThmHPert}). Under mild global assumptions on the perturbed Kerr spacetime, we show that this hypersurface coincides with the event horizon (Theorem~\ref{ThmHCI}).

\section{A general unstable manifold theorem}
\label{SU}

Let $\cX$ be a closed manifold and $\Gamma\subset\cX$ a smooth closed submanifold (in particular compact). Let $\bar f\colon\cX\to\cX$ be a smooth map with $\bar f(\Gamma)=\Gamma$ for which there exists an open neighborhood $\cU\supset\Gamma$ so that $\bar f|_\cU\colon\cU\to\bar f(\cU)$ is a diffeomorphism. Since the differential $D\bar f\colon T_\Gamma\cX\to T_\Gamma\cX$ restricts to a vector bundle map
\begin{equation}
\label{EqIDf}
  \Gamma_p\bar f:=D_p\bar f|_{T_p\Gamma}\colon T_p\Gamma\to T_{\bar f(p)}\Gamma,
\end{equation}
it also induces a map on the normal bundle $N\Gamma=T_\Gamma\cX/T\Gamma$ which we denote
\begin{equation}
\label{EqIDf2}
  N_p\bar f \colon N_p\Gamma \to N_{\bar f(p)}\Gamma.
\end{equation}
Suppose that $\bar f|_\Gamma$ is an almost-isometry and $\Gamma$ is a normal sink for $\bar f$ in the following sense: there exist a fiber metric on $N\Gamma$ and, for all $\eps>0$, a fiber metric on $T\Gamma$ so that
\begin{equation}
\label{EqINormHyp}
  \|N_p\bar f\| < 1;\qquad 1-\eps<m(\Gamma_p\bar f):=\inf_{w\in T_p\Gamma,\,\|w\|=1} \|\Gamma_p\bar f(w)\|\leq\|\Gamma_p\bar f\|<1+\eps.
\end{equation}
Define the map
\begin{equation}
\label{EqIf0}
  f_0 \colon \cM:=\R\times\cX\to\cM,\quad
  f_0(t,x) = (t-1,\bar f(x)),
\end{equation}
which maps $\Gamma_0:=\R\times\Gamma$ to itself. We measure the size of perturbations of $f_0$ using:

\begin{definition}[Weight function]
\label{DefIWeight}
  A weight function is a smooth function $0<\rho\in\CI([0,\infty))$ with
  \begin{equation}
  \label{EqIrho}
    \rho'\leq 0,\quad \lim_{t\to\infty}\rho(t)=0,\quad |\rho^{(k)}|\leq C_k\rho\ \forall\,k\in\N_0.
  \end{equation}
\end{definition}
Examples are $e^{-\alpha t}$ and $(1+t)^{-\alpha}$ for $\alpha>0$. For $m\in\N_0\cup\{\infty\}$, write $\cC_b^m(\cM)$ for the space of $u\in\CI(\cM)$ so that all of its up to $m$ many derivatives along $\pa_t$ and smooth vector fields on $\cX$ are bounded. For a bundle $E\to\cX$, the space $\cC_b^\infty(\cM;E)$ is defined analogously using local trivializations of $E$. Furthermore $\rho\cC_b^m(\cM)=\{u\colon\rho^{-1}u\in\cC_b^m(\cM)\}$.

Let $t_0\geq 0$. Fix a Riemannian metric on $\cX$ and suppose $f\colon\cM\cap\{t\geq t_0\}\to\cM$, is a shift by $-1$ in $t$ and a $\rho\cC_b^\infty$ perturbation of $\bar f$ in space. That is,
\begin{equation}
\label{EqIf}
  f(t,x)=\bigl(t-1,\exp_{\bar f(x)}\tilde V(t,\bar f(x))\bigr),\quad \tilde V(t,\cdot)\in\cV(\cX)=\CI(\cX;T\cX),\ \tilde V\in\rho\CI_b(\cM;T\cX).
\end{equation}

\begin{thm}[Existence and uniqueness of the perturbed unstable manifold]
\label{ThmRMain}
  There exists a smooth submanifold $\Gamma^{\rm u}\subset\cM\cap\{t\geq t_0\}$ with the following properties:
  \begin{enumerate}
  \item $\Gamma^{\rm u}$ is $f$-invariant in the sense that $f(\Gamma^{\rm u})\cap\{t\geq t_0\}=\Gamma^{\rm u}$;
  \item $\Gamma^{\rm u}$ approaches $\Gamma_0$ as $t\to\infty$ in a $\rho\CI_b$ sense, in that $\Gamma^{\rm u}$ is the normal graph over $\Gamma_0$ of a function in $\rho\CI_b(\Gamma_0;N\Gamma)$.
  \end{enumerate}
  Moreover, $\Gamma^{\rm u}$ is the unique $f$-invariant submanifold within the class of manifolds approaching $\Gamma_0$ in a $\rho\cC_b^1$ sense. Finally, given any $\tilde\Gamma^{\rm u}$ which approaches $\Gamma_0$ in a $\tilde\rho\cC_b^1$ sense for some weight function $\tilde\rho$, one obtains $\Gamma^{\rm u}$ as the $\cC_b^0$ limit of iterates $\tilde\Gamma^{\rm u}_0=\tilde\Gamma^{\rm u}\cap\{t\geq t_0\}$, $\tilde\Gamma^{\rm u}_{n+1}=f(\tilde\Gamma^{\rm u}_n)\cap\{t\geq t_0\}$.
\end{thm}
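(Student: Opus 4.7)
The plan is to represent $\Gamma^{\rm u}$ as the normal graph of a small section $V$ of $N\Gamma$ over $\Gamma_0$, reformulate the $f$-invariance condition as a fixed point equation $V=TV$ for a graph transform $T$, and solve it by the Banach contraction mapping theorem in a small ball of $\cC_b^0(\Gamma_0;N\Gamma)$. Concretely, I use the exponential map to identify a tubular neighborhood of $\Gamma_0\subset\cM$ with a neighborhood of the zero section in $\R_t\times N\Gamma$, with coordinates $(t,\gamma,\nu)$, $\gamma\in\Gamma$, $\nu\in N_\gamma\Gamma$. In these coordinates,
\[
  f_0(t,\gamma,\nu)=\bigl(t-1,\,\bar f|_\Gamma(\gamma),\,N_\gamma\bar f(\nu)+O(|\nu|^2)\bigr),
\]
and by~\eqref{EqIf} the perturbed $f$ differs from $f_0$ by translating $(\gamma,\nu)$ by an amount of size $O(\rho(t))$ in every $\cC_b^k$ norm.

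Given a section $V$ with $\|V\|_{\cC_b^1}$ sufficiently small, $f(\mathrm{graph}(V))$ lies in the tubular neighborhood and projects diffeomorphically onto $\Gamma_0\cap\{t\geq t_0-1\}$; here I use the almost-isometry of $\bar f|_\Gamma$ together with the implicit function theorem. It is therefore the graph of a new section whose restriction to $\{t\geq t_0\}$ I call $TV$. Writing $\gamma_+=\gamma_+(t,\gamma;V)\approx(\bar f|_\Gamma)^{-1}(\gamma)$ for the implicitly determined preimage on the base,
\[
  (TV)(t,\gamma)=N_{\gamma_+}\bar f\cdot V(t+1,\gamma_+)+O\bigl(|V(t+1,\cdot)|^2\bigr)+O\bigl(\rho(t+1)\bigr),
\]
and $f$-invariance of $\mathrm{graph}(V)$ is equivalent to $V=TV$. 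With $\kappa:=\sup_\Gamma\|N\bar f\|<1$, the estimate
\[
  \|TV-TW\|_{\cC_b^0}\leq\bigl(\kappa+C\delta+C\sup\nolimits_{t\geq t_0}\rho(t)\bigr)\|V-W\|_{\cC_b^0},
\]
together with a parallel bound on $\|TV\|_{\cC_b^0}$, shows after taking $\delta$ small and $t_0$ large that $T$ is a strict contraction on $B_\delta=\{\|V\|_{\cC_b^0}\leq\delta\}$. Banach's theorem then yields a unique fixed point $V_\ast\in B_\delta$, and an a~posteriori estimate using $|V_\ast|(t)\leq\kappa|V_\ast|(t+1)+O(\rho(t+1))$ and monotonicity of $\rho$ upgrades this to $V_\ast\in\rho\cC_b^0$; the graph of $V_\ast$ is $\Gamma^{\rm u}$.

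The step I expect to be the main obstacle is upgrading $V_\ast$ from $\cC_b^0$ to $\CI_b$ regularity. High-order derivatives of $TV$ introduce factors of the tangential derivative $\Gamma\bar f$ up to order $k$, so contraction of $T$ in $\cC_b^k$ requires the spectral gap $\|N\bar f\|\cdot\|\Gamma\bar f\|^k<1$. This is precisely why the almost-isometry hypothesis~\eqref{EqINormHyp} allows arbitrarily small $\eps$: for each $k$, I choose a fiber metric on $T\Gamma$ so that $\|\Gamma\bar f\|<1+\eps$ and $\|N\bar f\|(1+\eps)^k<1$, which restores contraction in $\cC_b^k$. The resulting fixed point coincides with $V_\ast$ by uniqueness in $B_\delta$, so $V_\ast\in\cC_b^k$ for every $k\in\N_0$, and the a~posteriori bound sharpens to $V_\ast\in\rho\cC_b^k$. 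This spectral gap argument in the spirit of Hirsch--Pugh--Shub is the crux, and is the point at which the general machinery from~\cite{HintzPolytrap} enters.

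Uniqueness and convergence of iterates then both drop out of the contraction property. Any $f$-invariant $\tilde\Gamma^{\rm u}$ approaching $\Gamma_0$ in $\rho\cC_b^1$ is, on $\{t\geq t_0\}$ with $t_0$ large, the normal graph of some section $\tilde V\in B_\delta$—the $\cC_b^1$-closeness is exactly what is needed to linearize and represent $\tilde\Gamma^{\rm u}$ as a graph—and invariance gives $T\tilde V=\tilde V$, so uniqueness in $B_\delta$ forces $\tilde V=V_\ast$. For the final claim, if $\tilde\Gamma^{\rm u}$ approaches $\Gamma_0$ in $\tilde\rho\cC_b^1$ for some weight $\tilde\rho$, then (using $\tilde\rho(t)\to 0$) for sufficiently large $t_0$ the set $\tilde\Gamma^{\rm u}_0$ is the graph of some $\tilde V_0\in B_\delta$, the iterates $\tilde\Gamma^{\rm u}_n$ are the graphs of $T^n\tilde V_0$, and Banach's theorem gives $T^n\tilde V_0\to V_\ast$ in $\cC_b^0$, i.e.\ $\tilde\Gamma^{\rm u}_n\to\Gamma^{\rm u}$ in $\cC_b^0$.
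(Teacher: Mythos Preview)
Your outline is correct and follows the direct Hadamard graph-transform route, but the paper proceeds differently. Rather than carrying out the contraction-and-bootstrap argument itself, the paper reduces everything to a black-box citation of \cite[Theorem~2.3]{HintzPolytrap}, which requires $r$-normal hyperbolicity and in particular a $D\bar f$-\emph{invariant} complement $\bar N^s\subset T_\Gamma\cX$ to $T\Gamma$. Essentially the entire content of the paper's proof is the construction of such a $\bar N^s$: writing it as the graph $\{(Aw,w):w\in S\}$ over an arbitrary complement $S\cong N\Gamma$, one solves for the section $A\in\Hom(S,T\Gamma)$ as the fixed point of an affine fiber-bundle map covering $\bar f^{-1}$, via \cite[Theorem~3.5]{HirschPughShubInvariantManifolds}. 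This step is exactly what absorbs the off-diagonal piece $\Gamma S\bar f$ of $D\bar f$---the term missing from your formula $f_0(t,\gamma,\nu)=(t-1,\bar f|_\Gamma(\gamma),N_\gamma\bar f(\nu)+O(|\nu|^2))$. In your approach that term never disappears; it is instead buried in the implicit definition of $\gamma_+$, which is perfectly legitimate. What each route buys: the paper's is modular and short once the cited results are granted; yours is self-contained and sidesteps the auxiliary construction of the invariant splitting altogether.

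Two technical points you should tighten. First, the graph transform $T$ is only well-defined on sections with small Lipschitz (or $\cC_b^1$) norm---otherwise the implicit function giving $\gamma_+$ need not exist---so your $B_\delta$ should be a $\cC_b^1$-ball equipped with the $\cC_b^0$-metric, together with a separate check that $T$ preserves the $\cC_b^1$-bound. Second, since $D\gamma_+\approx(\Gamma\bar f)^{-1}$, the $\cC_b^k$ spectral-gap condition is $\|N\bar f\|\cdot m(\Gamma\bar f)^{-k}<1$, not $\|N\bar f\|\cdot\|\Gamma\bar f\|^k<1$; hypothesis~\eqref{EqINormHyp} supplies both bounds, so your conclusion is unaffected.
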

\begin{proof}
  Fix any complementary subbundle $S\to\Gamma$ of $T_\Gamma\cX$ to $T\Gamma$; thus $T_\Gamma\cX\to N\Gamma$ induces
  \begin{equation}
  \label{EqRMainIso}
    S\cong N\Gamma.
  \end{equation}
  We fix a fiber inner product on $S$ by pulling back the fiber inner product on $N\Gamma$.

  If $S$ is invariant under $D\bar f$, i.e.\ $D\bar f\colon S_p\to S_{\bar f(p)}$, then~\eqref{EqINormHyp} implies the $r$-normal hyperbolicity (for every $r$) of $\bar f$ at $\Gamma$, with trivial unstable normal bundle: in the notation of \cite[\S{2.3}]{HintzPolytrap}, one can take $\bar N^u=0$ (trivial bundle over $\Gamma$) and $\bar N^s=S$. Since then all assumptions of \cite[Theorem~2.3]{HintzPolytrap} are satisfied, we obtain the existence and uniqueness of $\Gamma^{\rm u}$ as claimed. The final statement follows immediately from the proof of \cite[Theorem~2.3]{HintzPolytrap} which, following \cite{HirschPughShubInvariantManifolds}, is based on a graph transform argument {\`a} la Hadamard. See in particular \cite[equation~(2.22)]{HintzPolytrap} where the constant denoted $C_\Sigma$ there can be taken arbitrary large but fixed (cf.\ \cite[equation~(2.36)]{HintzPolytrap}).

  To complete the proof, it thus suffices to show that the assumptions~\eqref{EqIDf}--\eqref{EqINormHyp} imply the existence of a smooth $D\bar f$-invariant complementary subbundle $\bar N^s\subset T_\Gamma\cX$. We work in the splitting
  \begin{equation}
  \label{EqISplit}
    T_\Gamma\cX=T\Gamma\oplus S.
  \end{equation}
  Write $\pi_S\colon T_\Gamma\cX\to S$ for the projection, and $S_p\bar f=\pi_S\circ D_p\bar f|_{S_p}\colon S_p\to S_{\bar f(p)}$ for $N_p f$ under the identification~\eqref{EqRMainIso}. The fiber over a point $p\in\Gamma$ of any complementary bundle can be written as
  \begin{equation}
  \label{EqINsFiber}
    \bar N^{\rm s}_p = \{(A_p(w),w)\colon w\in S_p\}
  \end{equation}
  for some $A_p\in\Hom(S_p,T_p\Gamma)$. We thus need to find
  \[
    A\in\Hom(S,T\Gamma)\ \ \text{s.t.}\ \ 
    D_p\bar f\bigl(A_p(w),w\bigr) = \bigl(A_{\bar f(p)}(w'),w'\bigr)\quad \forall\,p\in\Gamma,\ w\in S_p.
  \]
  The desired $A$ is thus an invariant section for the map $\ell\colon A\mapsto A'$ defined by
  \begin{equation}
  \label{EqIellp}
    D_{\bar f^{-1}(p)}\bar f\bigl(A'_{\bar f^{-1}(p)}(w),w\bigr) = \bigl(A_p(w'),w'\bigr).
  \end{equation}
  Define $\Gamma S\bar f$ by writing
  \[
    D_p\bar f = \begin{pmatrix} \Gamma_p\bar f & \Gamma S_p\bar f \\ 0 & S_p\bar f \end{pmatrix}
  \]
  in the splitting~\eqref{EqISplit}. In~\eqref{EqIellp}, we must then have $w'=S_{\bar f^{-1}(p)}\bar f(w)$, and therefore
  \begin{equation}
  \label{EqIEllp2}
    A'_{\bar f^{-1}(p)} = (\Gamma_{\bar f^{-1}(p)}\bar f)^{-1} \circ\bigl( A_p \circ S_{\bar f^{-1}(p)}\bar f - \Gamma S_{\bar f^{-1}(p)}\bar f\bigr).
  \end{equation}
  Note that $\ell\colon A\mapsto A'$ is a fiber bundle map (in fact, affine linear) on the bundle $\Hom(S,T\Gamma)\to\Gamma$ covering $\bar f^{-1}$. We estimate the operator norm $k_p$ of the linearization of $\ell_p\colon A_p\mapsto A'_{\bar f^{-1}(p)}$ (which maps $B\mapsto(\Gamma_{\bar f^{-1}(p)}\bar f)^{-1}\circ B\circ S_{\bar f^{-1}(p)}\bar f$) by
  \[
    k_p \leq \| (\Gamma_{\bar f^{-1}(p)}\bar f)^{-1} \| \|S_{\bar f^{-1}(p)}\bar f \| = m(\Gamma_{\bar f^{-1}(p)}\bar f)^{-1} \| S_{\bar f^{-1}(p)}\bar f \|.
  \]
  For every $r\in\N$ we can choose the fiber metric on $T\Gamma$ so that $\alpha_p:=\|\Gamma_{\bar f^{-1}(p)}\bar f\|<1+\eps$ for any fixed $\eps>0$. By~\eqref{EqINormHyp}, this implies
  \[
    k_p\alpha_p^{r'} \leq \biggl(\,\sup_{p\in\Gamma} \| S_{\bar f^{-1}(p)}\bar f \|\biggr) (1-\eps)^{-1}(1+\eps)^{r'} < 1\quad \forall\,r'=0,\ldots,r
  \]
  if $\eps$ is small enough. \cite[Theorem~3.5]{HirschPughShubInvariantManifolds} now applies (with $r\in\N$ there arbitrary) and produces the desired smooth section $A\in\Hom(S,T\Gamma)$.
\end{proof}

\begin{rmk}[Introductory example]
\label{RmkUEx}
  In the context of example~\eqref{EqIEx}, we take $\cX=\Sph^1_\phi\times\R_z$, $\Gamma=\Sph^1_\phi\times\{0\}$, and $\bar f(\phi,z)=(\phi+a-z,z/2)$; the non-compactness of $\cX$ in $z$ is irrelevant since we are working near $z=0$ anyway. Identifying $N_{(\phi,z)}\Gamma\cong S_{(\phi,z)}:=\mathspan\{\pa_z\}$ and the norms $\|v\pa_\phi\|=|v|$, $\|w\pa_z\|=|w|$, we indeed have~\eqref{EqINormHyp}. The desired $A_p$ is a scalar, and~\eqref{EqIEllp2} reads $A'_{\bar f^{-1}(\phi,0)}=1^{-1}\cdot(A_{(\phi,0)}\cdot\frac12+1)$ with fixed point $A\equiv 2$; and indeed $\bar N^s=\mathspan\{ 2\pa_\phi+\pa_z \}$ is invariant under $D\bar f$.
\end{rmk}

For us, $\bar f$ will arise as the time $1$ flow of a vector field. We proceed to discuss the first condition in~\eqref{EqINormHyp} in this context in some generality. Let $M$ be a smooth manifold, and let $X\subset M$ a smooth compact submanifold. Let $V\in\cV(M)$ be tangent to $X$. Write $\phi_V^t$ for the time $t$ flow of $V$, and $N_x\phi_V^t\colon N_x X\to N_{\phi_V^t(x)}X$ for the map induced by $D_x\phi_V^t\colon T_x M\to T_{\phi_V^t(x)}M$ on the normal bundle. Write $\cI\subset\CI(M)$ for the ideal of functions vanishing on $X$. Define the \emph{linearization} of $V$ at $X$ by
\begin{equation}
\label{EqULin}
  L_V \in \Diff^1(X;N^*X),\quad L_V\colon\dd f\mapsto\dd(V f)\ \ (f\in\cI).
\end{equation}
Since $\CI(X;N^*X)\cong\cI/\cI^2$ via $\dd f\mapsfrom f+\cI^2$, $f\in\cI$, and since $V\colon\cI^j\to\cI^j$, $j=1,2$, this is well-defined. (Equivalently, $L_V\colon\CI(X;N^*X)\ni\xi\mapsto\dd\la\xi,V\ra\in\CI(X;N^*X)$ where $\la\cdot,\cdot\ra$ is the dual pairing.) Furthermore, the Leibniz rule $L_V(g\,\dd f)=g L_V(\dd f)+(V g)\dd f$ holds. Given a local frame $\dd f_1,\ldots,\dd f_k$ of $N^*X$ over $U\subset X$, we thus have (summation convention)
\begin{equation}
\label{EqULinExpl}
  L_V(g^i\,\dd f_i) = (V g^j + \ell^j{}_i g^i)\dd f_j,\quad g_i\in\CI(U),\quad L_V(\dd f_i) =: \ell^j{}_i\dd f_j,\ \ell^i{}_j\in\CI(U).
\end{equation}
Therefore, $L_V$ induces a partial connection on $N^*X$ along integral curves of $V$ in $X$, and therefore a notion of parallel transport $P_V^t\colon N^*_x X\to N^*_{\phi_V^t(x)}X$. (Explicitly, setting $\gamma(s)=\phi_V^s(x)$, we have $P_V^t(t)w=w(t)$ where $w(0)=w=w^i\,\dd f_i$ and $(w^j)'+\ell^j{}_i w^i=0$.)

\begin{lemma}[Parallel transport and normal linearization]
\label{LemmaULin}
  $N_x\phi_V^t=(P_V^t(t)^*)^{-1}\colon N_x X\to N_{\phi_V^t(x)}X$.
\end{lemma}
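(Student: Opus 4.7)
The plan is to reinterpret the claim dually. Since $\phi_V^t$ preserves $X$, the cotangent pullback $(\phi_V^t)^*\colon T^*_{\phi_V^t(x)}M\to T^*_x M$ restricts to a map $N^*_{\phi_V^t(x)}X\to N^*_x X$ which, by construction, is the dual of the quotient map $N_x\phi_V^t$. Hence the lemma is equivalent to the identity
\[
  (\phi_V^t)^*|_{N^*_{\phi_V^t(x)}X}\circ P_V^t(t)=\Id_{N^*_xX},
\]
that is, parallel transport and cotangent pullback are inverse on conormal vectors; dualizing (using the canonical identification $N=N^{**}$) gives $N_x\phi_V^t=(P_V^t(t)^*)^{-1}$.

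To prove this identity I fix $\xi_0\in N^*_xX$ and verify that
\[
  \omega(t):=(\phi_V^t)^*\bigl(P_V^t(t)\xi_0\bigr)\in N^*_xX
\]
is independent of $t$, noting that $\omega(0)=\xi_0$. Working in a local frame $\dd f_1,\ldots,\dd f_k$ of $N^*X$ with $f_i\in\cI$, I write $P_V^t(t)\xi_0=\xi^i(t)\,\dd f_i|_{\phi_V^t(x)}$, so that $\omega(t)=\xi^i(t)\,\dd(f_i\circ\phi_V^t)|_x$, while the parallel transport ODE reads $(\xi^j)'(t)=-\ell^j{}_i(\phi_V^t(x))\xi^i(t)$. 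Differentiating $\omega$ via $\tfrac{\dd}{\dd t}(f_i\circ\phi_V^t)=(Vf_i)\circ\phi_V^t$ yields
\[
  \omega'(t)=(\xi^j)'(t)\,\dd(f_j\circ\phi_V^t)|_x+\xi^i(t)\,\dd\bigl((Vf_i)\circ\phi_V^t\bigr)|_x.
\]

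The content lies in evaluating the second term. I pass from the intrinsic formula~\eqref{EqULinExpl} on $X$ to a neighborhood identity: choosing any smooth lifts of the connection coefficients (still denoted $\ell^j{}_i$) one has $Vf_i=\ell^j{}_if_j+r_i$ with $r_i\in\cI^2$ near $X$, and any two lifts differ by an element of $\cI$, so the ambiguity lies in $\cI^2$ and is absorbed into $r_i$. Because $\phi_V^t$ preserves $X$, composition with $\phi_V^t$ maps $\cI$ into $\cI$ and $\cI^2$ into $\cI^2$; since any element of $\cI^2$ has vanishing differential at points of $X$, we obtain $\dd(r_i\circ\phi_V^t)|_x=0$. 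The Leibniz rule combined with $(f_j\circ\phi_V^t)(x)=0$ then gives
\[
  \dd\bigl((Vf_i)\circ\phi_V^t\bigr)|_x=\ell^j{}_i(\phi_V^t(x))\,\dd(f_j\circ\phi_V^t)|_x.
\]
Substituting back, the two terms of $\omega'(t)$ cancel exactly by the parallel transport ODE, so $\omega(t)\equiv\xi_0$ and the inverse relation is established.

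The only real obstacle is the $\cI/\cI^2$ bookkeeping: one must verify that the second-order remainder $r_i$ is genuinely irrelevant at $x\in X$ and that the resulting neighborhood computation recovers precisely the ODE used to define $P_V^t$. Once the Leibniz rule and the ideal-theoretic definition of $\ell^j{}_i$ are handled cleanly, the identity drops out mechanically.
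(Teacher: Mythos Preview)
Your proof is correct and follows essentially the same approach as the paper: both arguments show that the pairing between the normal pushforward and the parallel transport is constant in $t$ by differentiating and invoking the parallel transport ODE. The only difference is packaging---you work directly in a local frame and handle the $\cI/\cI^2$ bookkeeping explicitly, whereas the paper derives the coordinate-free identity $\frac{\dd}{\dd t}(\xi\circ N_x\phi_V^t)=(\phi_V^t)^*L_V\xi$ via the Cartan formula and then pairs with $N_x\phi_V^t(n_0)$; these are the same computation viewed from two angles.
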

\begin{proof}
  We begin with a preliminary computation at $x$. For $f\in\cI$, we compute
  \[
    \frac{\dd}{\dd t}(\dd f\circ N_x\phi_V^t)\Big|_{t=0} = \frac{\dd}{\dd t}(\dd f\circ D_x\phi_V^t)\Big|_{t=0} = \frac{\dd}{\dd t}\bigl((\phi_V^t)^*(\dd f)\bigr)\Big|_{t=0} = \cL_V(\dd f)=\dd(V f) = L_V(f).
  \]
  More generally, if $f\in\cI$, $g\in\CI(M)$, then $(V f)|_X=0$ implies
  \[
    \frac{\dd}{\dd t}(g\,\dd f\circ N_x\phi_V^t)\Big|_{t=0} = \dd(i_V(g\,\dd f)) + i_V\dd(g\,\dd f) = g L_V(f) + (V g)\dd f = L_V(g\,\dd f)
  \]
  by the Leibniz rule for $L_V$. Thus $\frac{\dd}{\dd t}(\xi\circ N_x\phi_V^t)|_{t=0}=L_V\xi$ for all $\xi\in\CI(X;N^*X)$. We apply this to $(\phi^{t_0}_V)^*\xi$ where $t_0\in\R$ and observe $L_V\bigl((\phi_V^t)^*\xi\bigr) = \dd\la(\phi_V^t)^*\xi,V\ra = \dd\bigl( (\phi_V^t)^* \la \xi,D_x\phi_V^t(V)\ra \bigr) = (\phi_V^t)^*\dd\la\xi,V\ra = (\phi_V^t)^*L_V\xi$ to get
  \begin{equation}
  \label{EqULinIdent}
    \frac{\dd}{\dd t}(\xi\circ N_x\phi_V^t) = (\phi_V^t)^*L_V\xi.
  \end{equation}

  Let now $n_0\in N_x X$ and $\xi_0\in N^*_x X$. For $\xi:=P_V^t(\xi_0)\in N_{\phi_V^t(x)}X$, we then have
  \[
    \frac{\dd}{\dd t}\la P_V^t(\xi_0),N_x\phi_V^t(n_0)\ra = \frac{\dd}{\dd t} (\xi\circ N_x\phi_V^t)(n_0) = 0
  \]
  by~\eqref{EqULinIdent}. Therefore, $\la\xi_0,n_0\ra=\la P_V^t(\xi_0),N_x\phi_V^t(n_0)\ra$ for all $t$, as was to be shown.
\end{proof}

\begin{cor}[Normal contraction rate]
\label{CorUSink}
  Let $\theta\in\R$. Then the following are equivalent:
  \begin{enumerate}
  \item $\exists$ fiber inner product on $N X$ with $\|N_x\phi_V^t\|\leq e^{-\theta t}$ for all $t\geq 0$, $x\in X$;
  \item $\exists$ fiber inner product on $N^*X$ with $\frac{\dd}{\dd t}\|P_V^t w\|\big|_{t=0}\geq\theta\|w\|$ for all $x\in X$, $w\in N^*_x X$.
  \end{enumerate}
\end{cor}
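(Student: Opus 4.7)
My plan is to reduce everything to parallel transport on $N^*X$ via Lemma \ref{LemmaULin}, then convert an infinitesimal lower bound into an integrated one by Gronwall/cocycle arguments.

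First I would set up the duality. Any fiber inner product on $NX$ induces a canonical dual fiber inner product on $N^*X$ and vice versa, and we agree to always use such dual pairs. By Lemma \ref{LemmaULin}, $N_x\phi_V^t = ((P_V^t)^*)^{-1}$ as a map $N_x X\to N_{\phi_V^t(x)}X$. For linear maps between finite-dimensional inner product spaces, duality preserves both the operator norm and the conorm $m(A):=\inf_{\|w\|=1}\|Aw\|$, and $\|A^{-1}\|=m(A)^{-1}$. Hence
\[
  \|N_x\phi_V^t\|_{N_x X\to N_{\phi_V^t(x)}X} = \|((P_V^t)^*)^{-1}\| = m(P_V^t)^{-1} = \Bigl(\inf_{0\neq w\in N^*_x X} \tfrac{\|P_V^t w\|}{\|w\|}\Bigr)^{-1}.
\]
Therefore condition (1) of the corollary is equivalent to the integrated statement
\begin{equation}
\label{EqUSinkInt}
  \|P_V^t w\| \geq e^{\theta t}\|w\|\quad \forall\,t\geq 0,\ x\in X,\ w\in N^*_x X,
\end{equation}
where the norm on $N^*X$ is the inner product dual to the given one on $NX$.

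Next I would show that \eqref{EqUSinkInt} is equivalent to the infinitesimal condition (2). The direction (\ref{EqUSinkInt})$\Rightarrow$(2) follows by differentiating at $t=0$. For the converse, the key is the cocycle property of parallel transport: $P_V^{s+t}w = P_V^t(P_V^s w)$ where the outer $P_V^t$ is based at $\phi_V^s(x)$. Applying (2) at the base point $\phi_V^s(x)$ with vector $P_V^s w\in N^*_{\phi_V^s(x)}X$ gives
\[
  \frac{\dd}{\dd s}\|P_V^s w\| = \frac{\dd}{\dd t}\|P_V^t(P_V^s w)\|\Big|_{t=0}\geq \theta\|P_V^s w\|,
\]
and Gronwall yields $\|P_V^s w\|\geq e^{\theta s}\|w\|$, which is \eqref{EqUSinkInt}.

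The only mild subtlety, and the step I would think about most carefully, is the initial differentiability justification: the function $s\mapsto\|P_V^s w\|$ is smooth where nonzero, but a priori only Lipschitz near zeros; however since \eqref{EqUSinkInt} is what we want in the end, one can either argue that $\theta>0$ rules out zeros beyond $s=0$, or differentiate $\|P_V^s w\|^2$ instead, which is smooth, and use $\frac{\dd}{\dd s}\|P_V^s w\|^2\geq 2\theta\|P_V^s w\|^2$. The rest of the argument is a direct translation via Lemma \ref{LemmaULin} and the standard duality of operator norms.
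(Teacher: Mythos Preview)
Your proof is correct and follows the same route as the paper: dualize the inner product and use Lemma~\ref{LemmaULin} to convert the operator-norm bound on $N_x\phi_V^t$ into the conorm bound on $P_V^t$, then pass between the integrated and infinitesimal forms. The paper's proof is the single sentence ``Take dual inner products on $N X$ and $N^*X$,'' and you have simply written out the details; your worry about zeros of $s\mapsto\|P_V^s w\|$ is unnecessary since $P_V^s$ is a linear isomorphism, so $P_V^s w\neq 0$ whenever $w\neq 0$.
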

\begin{proof}
  Take dual inner products on $N X$ and $N^*X$.
\end{proof}

\section{Construction of perturbed horizons}
\label{SH}

We consider Kerr--Newman(--de~Sitter or --anti de~Sitter) spacetimes with mass $\bhm\geq 0$, angular momentum $a$, charge $Q$, and cosmological constant $\Lambda\in\R$. We consider a value $r_+$ of the standard coordinate $r$ at which there is a non-degenerate horizon. Special cases include subextremal Kerr ($\Lambda=0$, and $r_+=\bhm+\sqrt{\bhm^2-a^2}$ is the radius of the event horizon), de~Sitter space ($\bhm=0$, $\Lambda>0$, and $r_+=\sqrt{3/\Lambda}$ is the radius of the cosmological horizon), and subextremal Kerr--(Newman--)de~Sitter ($\Lambda>0$, and $r_+$ is the radius of either the event or the cosmological horizon). In Boyer--Lindquist coordinates, the metric (see e.g.\ \cite[\S{A}]{PodolskyGriffithsKerrNewmanAcc}) is
\begin{align*}
  g_0 &= -\frac{\mu}{b^2\varrho^2}(\dd t-a\sin^2\theta\,\dd\phi)^2 + \varrho^2\Bigl(\frac{\dd r^2}{\mu}+\frac{\dd\theta^2}{\varkappa}\Bigr) + \frac{\varkappa\sin^2\theta}{b^2\varrho^2}(a\,\dd t-(r^2+a^2)\,\dd\phi)^2, \\
    & b=1+\frac{\Lambda a^2}{3},\ \varkappa=1+\frac{\Lambda a^2}{3}\cos^2\theta,\ \varrho^2=r^2+a^2\cos^2\theta, \ \mu=(r^2+a^2)\Bigl(1-\frac{\Lambda r^2}{3}\Bigr) - 2\bhm r+b^2 Q^2.
\end{align*}
We require $b>0$. We consider a value $r_+$ where $\mu(r_+)=0$ and $\mu'(r_+)>0$. These conditions are valid at the event horizon.\footnote{One can equally well consider the case $\mu'(r_+)<0$ which occurs at the cosmological horizon of a de~Sitter type spacetime; we leave it to the reader to track the sign changes in the arguments below.} We pass to Kerr star type coordinates $t_*,r,\theta,\phi_*$ regular near $r=r_+$,
\[
  \dd t=\dd t_*-\frac{b(r^2+a^2)}{\mu}\,\dd r,\quad \dd\phi=\dd\phi_*-\frac{b a}{\mu}\,\dd r.
\]
Writing covectors as $\zeta=\sigma\,\dd t_*+\xi\,\dd r+\eta_\theta\,\dd\theta+\eta_\phi\,\dd\phi_*$, we then compute for $G_0(\zeta)=g_0^{-1}(\zeta,\zeta)$:
\begin{equation}
\label{EqHG0}
  \sfG_0 := \varrho^2 G_0 = \mu\xi^2 + 2 b\bigl( (r^2+a^2)\sigma + a\eta_\phi \bigr)\xi + \sC_0,\quad \sC_0 := \varkappa\eta_\theta^2 + \frac{b^2}{\varkappa\sin^2\theta}(\eta_\phi+a\sin^2\theta\,\sigma)^2.
\end{equation}
Since $\varrho^2 g_0^{-1}(\dd r,\dd r)=\mu=0$ at $r=r_+$, the hypersurface
\begin{equation}
\label{EqH0}
  \cH_0 := \R_{t_*} \times \Sph^2_{\theta,\phi_*} \times \{r_+\} \subset M = \R_{t_*}\times\Sph^2_{\theta,\phi_*} \times (r_+-2\delta,r_++2\delta)
\end{equation}
is a null hypersurface (the \emph{event horizon} for $g_0$), and $G_0$ vanishes on $N^*\cH_0$, in particular on
\[
  N_-^*\cH_0=\{(t_*,r,\theta,\phi_*,\sigma,\xi,\eta_\theta,\eta_\phi)\colon r=r_+,\ \sigma=\eta_\theta=\eta_\phi=0,\ \xi<0\}.
\]
We consider here only the half $\xi<0$ of $N^*\cH_0\setminus o$ since $H_{\sfG_0}t_*=\pa_\sigma\sfG_0=2 b(r_+^2+a^2)\xi<0$ there. It will be convenient to work on the spherical cotangent bundle $S^*M=(T^*M\setminus o)/\R_+$ near $S N_-^*\cH_0$ (the image of $N_-^*\cH_0$ in $S^*M$) by introducing the fiber coordinates
\[
  \hat\sigma=\frac{\sigma}{\xi},\ \hat\eta_\theta=\frac{\eta_\theta}{\xi},\ \hat\eta_\phi=\frac{\eta_\phi}{\xi}.
\]
For now, we keep track also of the inverse fiber-radial coordinate $\rho_\infty=-\frac{1}{\xi}>0$. Then
\begin{equation}
\label{EqHG0Vf}
\begin{split}
  \rho_\infty H_{\sfG_0} &= -2 b\bigl( (r^2+a^2)\pa_{t_*} + a\pa_{\phi_*}\bigr) - 2\bigl(\mu+b((r^2+a^2)\hat\sigma+a\hat\eta_\phi)\bigr)\pa_r \\
    &\qquad - (\mu'+4 b r\hat\sigma)(\rho_\infty\pa_{\rho_\infty} + \hat\sigma\pa_{\hat\sigma} + \hat\eta_\theta\pa_{\hat\eta_\theta} + \hat\eta_\phi\pa_{\hat\eta_\phi} ) + \rho_\infty H_{\sC_0}; \\
  \sfH_{\sC_0} := \rho_\infty H_{\sC_0} &= -2\varkappa\hat\eta_\theta\pa_\theta -\frac{2 b^2}{\varkappa\sin^2\theta}(\hat\eta_\phi+a\sin^2\theta\,\hat\sigma)(\pa_{\phi_*}+a\sin^2\theta\,\pa_{t_*}) \\
    &\qquad + \Bigl(\varkappa'\hat\eta_\theta^2 + \pa_\theta\Bigl(\frac{b^2}{\varkappa\sin^2\theta}(\hat\eta_\phi+a\sin^2\theta\,\hat\sigma)^2\Bigr)\Bigr)\pa_{\hat\eta_\theta}
\end{split}
\end{equation}
By homogeneity, these induce vector fields on $S^*M$: this removes the term involving $\rho_\infty\pa_{\rho_\infty}$. We now restrict to the characteristic set $\sfG_0^{-1}(0)$: note that $\pa_r\sfG_0=\mu'\xi^2>0$ at $N^*\cH_0$, so $r$ can be written as a function of $\sigma,\xi,\eta_\theta,\eta_\phi,\theta,\phi_*$ in a conic neighborhood, and thus (by homogeneity) as a function $r=r_0(\hat\sigma,\hat\eta_\theta,\hat\eta_\phi,\theta,\phi_*)$. Restricting to $\{\rho_\infty^2\sfG_0=0\}$ thus allows us to further drop the $\pa_r$ component. Finally normalizing the coefficient of $\pa_{t_*}$ to be $-1$, we thus define
\begin{equation}
\label{EqHVf}
\begin{split}
  V_0 &:= \frac{1}{2 b(r^2+a^2)}(\rho_\infty H_{\sfG_0})|_{(\rho_\infty^2\sfG_0)^{-1}(0)} = -\pa_{t_*} + \bar V, \\
  \bar V &= -\frac{\mu'(r)+4 b r\hat\sigma}{2 b(r^2+a^2)}(\hat\sigma\pa_{\hat\sigma} + \hat\eta_\theta\pa_{\hat\eta_\theta} + \hat\eta_\phi\pa_{\hat\eta_\phi} ) - \frac{a}{r^2+a^2}\pa_{\phi_*} + \frac{1}{2 b(r^2+a^2)}\sfH_{\sC_0},
\end{split}
\end{equation}
where $\bar V$ is tangent to
\begin{equation}
\label{EqHGammaX}
  \Gamma := \{ \hat\sigma=\hat\eta_\theta=\hat\eta_\phi=0 \} \subset \cX := \R_{\hat\sigma} \times (T^*\Sph^2)_{\theta,\phi_*,\hat\eta_\theta,\hat\eta_\phi}.
\end{equation}
Thus, $\Gamma$ is a cross section of $S N_-^*\cH_0$ projected off the $r$-coordinate; and $\bar V\in\cV(\cX)$. Let $\cU\subset\cX$ denote a precompact open neighborhood of $\Gamma$ so that the time $1$ flow $e^{\bar V}$ of $\bar V$ remains in $\cX$. Define\footnote{If we smoothly cut off $\bar V$ to $0$ outside of an open neighborhood of $\Gamma$ so that the resulting vector field $\bar V'$ vanishes outside of $\cU$, then we can also replace $\cX$ by a compact manifold, and $e^{\bar V'}\colon\cX\to\cX$ diffeomorphically. This puts us into the setting of~\S\ref{SU}. Since all considerations for now are local near $\Gamma$, we permit ourselves the minor imprecision of working with~\eqref{EqHGammaX}.}
\[
  \bar f := e^{\bar V} \colon \cU\to\cX.
\]

\begin{lemma}[Normal sink nature of $\Gamma$]
\label{LemmaH1}
  Fix the standard metric on $\Sph^2$ on $\Gamma=\Sph^2_{\theta,\phi_*}$. Then $\bar f|_\Gamma$ is an isometry. There exists a fiber metric on $N\Gamma$ so that $\|N_p\bar f\|<1$, $p\in\Gamma$, and thus~\eqref{EqINormHyp} holds.
\end{lemma}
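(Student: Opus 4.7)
The plan is to read off both claims directly from the explicit form~\eqref{EqHVf} of $\bar V$. At $\hat\sigma=\hat\eta_\theta=\hat\eta_\phi=0$, the defining equation $\rho_\infty^2\sfG_0=0$ of the characteristic set reduces to $\mu(r)=0$, forcing $r_0|_\Gamma=r_+$. The first summand of $\bar V$ in~\eqref{EqHVf} carries the overall factor $\hat\sigma\pa_{\hat\sigma}+\hat\eta_\theta\pa_{\hat\eta_\theta}+\hat\eta_\phi\pa_{\hat\eta_\phi}$ and hence vanishes on $\Gamma$; every term in $\sfH_{\sC_0}$ has an explicit factor of $\hat\eta_\theta$ or $\hat\eta_\phi+a\sin^2\theta\,\hat\sigma$, and so it too vanishes on $\Gamma$. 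Therefore $\bar V|_\Gamma=-\tfrac{a}{r_+^2+a^2}\pa_{\phi_*}$, and $\bar f|_\Gamma=e^{\bar V}|_\Gamma$ is a rotation of $\Sph^2$ about the $z$-axis by the constant angle $-a/(r_+^2+a^2)$; this is an isometry of the standard round metric.

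Next I compute the normal linearization $L_{\bar V}$ on $N^*\Gamma$ in the local frame $\dd\hat\sigma,\dd\hat\eta_\theta,\dd\hat\eta_\phi$ (well defined away from the poles of $(\theta,\phi_*)$). Set
\[
  \lambda := \frac{\mu'(r_+)}{2 b(r_+^2+a^2)} > 0
\]
and let $\cI\subset\CI(\cX)$ be the ideal of functions vanishing on $\Gamma$. I claim $\bar V f_j\equiv -\lambda f_j\pmod{\cI^2}$ for each $f_j\in\{\hat\sigma,\hat\eta_\theta,\hat\eta_\phi\}$: the middle summand $-a/(r^2+a^2)\,\pa_{\phi_*}$ annihilates $f_j$; the first summand multiplies $f_j$ by $-(\mu'(r)+4 b r\hat\sigma)/(2 b(r^2+a^2))$, which reduces to $-\lambda$ modulo $\cI$; and $\sfH_{\sC_0}$ has $\pa_{\hat\eta_\theta}$ as its only fiber-differentiation direction, with coefficient quadratic in $(\hat\sigma,\hat\eta_\theta,\hat\eta_\phi)$, so it contributes only $\cI^2$ terms. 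Thus $L_{\bar V}(\dd f_j)=-\lambda\,\dd f_j$ on $\Gamma$, and parallel transport $P^t_{\bar V}$ on $N^*\Gamma$ scales each frame element by $e^{\lambda t}$.

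To invoke Corollary~\ref{CorUSink} I exhibit a global fiber metric realizing this rate. Using the canonical splitting $N\Gamma=\R\pa_{\hat\sigma}\oplus T^*\Sph^2$ along the zero section of $T^*\Sph^2$, equip $N\Gamma$ with the orthogonal direct sum of the standard metric on $\R$ and the dual of the round metric on $\Sph^2$. In the dual fiber metric on $N^*\Gamma$, the canonical identifications $\dd\hat\eta_\theta\leftrightarrow\pa_\theta$ and $\dd\hat\eta_\phi\leftrightarrow\pa_{\phi_*}$ in $T\Sph^2$ make the metric coefficients in the local frame $\dd\hat\sigma,\dd\hat\eta_\theta,\dd\hat\eta_\phi$ depend only on $\theta$, which is preserved by the flow of $\bar V|_\Gamma$; combined with the frame rescaling by $e^{\lambda t}$ this yields $\|P^t_{\bar V}w\|=e^{\lambda t}\|w\|$ for all $x\in\Gamma$ and $w\in N^*_x\Gamma$. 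Condition~(2) of Corollary~\ref{CorUSink} thus holds with $\theta=\lambda>0$, and the corollary produces a fiber metric on $N\Gamma$ with $\|N_p\bar f\|\leq e^{-\lambda}<1$ for every $p\in\Gamma$; together with $\bar f|_\Gamma$ being an isometry this immediately gives~\eqref{EqINormHyp}. The only genuine computation is the modulo-$\cI^2$ identity in the second paragraph, whose content is that $\sfH_{\sC_0}$ disappears at this order, reducing the normal linearization to a single scalar governed by the non-degeneracy $\mu'(r_+)>0$.
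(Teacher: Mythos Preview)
Your proof is correct and follows essentially the same route as the paper's: both identify $\bar f|_\Gamma$ as the rotation $(\theta,\phi_*)\mapsto(\theta,\phi_*-a/(r_+^2+a^2))$, compute that the linearization $L_{\bar V}$ acts as $-\lambda\,\mathrm{Id}$ on $N^*\Gamma$ with $\lambda=\mu'(r_+)/(2b(r_+^2+a^2))$ (the paper writes $\kappa$), equip $N\Gamma\cong\R\oplus T^*\Sph^2$ with the obvious sum metric, and invoke Corollary~\ref{CorUSink}. Your version is slightly more explicit in spelling out why the chosen fiber metric satisfies the hypothesis of Corollary~\ref{CorUSink} (namely that its coefficients depend only on $\theta$, which is preserved by the flow of $\bar V|_\Gamma$), a point the paper leaves implicit in writing $\frac{\dd}{\dd t}\|w+tw'\|\big|_{t=0}$.
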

\begin{proof}
  The first part follows from $\bar f|_\Gamma\colon(\theta,\phi_*)\mapsto(\theta,\phi_*-\frac{a}{r_+^2+a^2})$ being a rotation. For the second part, we write $\kappa:=\frac{\mu'(r_+)}{2 b(r_+^2+a^2)}>0$ and compute the linearization $L_{\bar V}$ of $\bar V$ at $\Gamma$ (see~\eqref{EqULin}) to map $\xi\mapsto-\kappa\xi$ for $\xi=\dd\hat\sigma$, $\dd\hat\eta_\theta$, $\dd\hat\eta_\phi$. We thus read off from $L_{\bar V}$ the bundle endomorphism
  \[
    \End(N^*\Gamma) = \End\bigl( \R\dd\hat\sigma \oplus T^*_o(T^*\Sph^2) \bigr) \ni \ell = -\kappa.
  \]
  This is the map encoded by $\ell^i{}_j$ in~\eqref{EqULinExpl} upon using as a local frame $\dd\hat\sigma$ and differentials of fiber-linear functions on $T^*\Sph^2$ (for a cover of $\Sph^2$ by coordinate charts).

  Fix the fiber inner product $\la\cdot,\cdot\ra:=I\oplus\slg$ where $\slg$ is the standard inner product on $T^*_o(T^*\Sph^2)$ coming from $T^*\Sph^2$; then $\la w,\ell w\ra=-\kappa\la w,w\ra$. Let $\|u\|:=\la u,u\ra^{1/2}$. Given $w\in N^*_x\Gamma$, we then have for $w':=-\ell w$ the inequality $\frac{\dd}{\dd t}\|w+t w'\|\big|_{t=0}\geq\kappa\|w\|$ (in fact, equality holds). Corollary~\ref{CorUSink} gives $\|N_p\bar f\|=\|N_p\phi_{\bar V}^1\|\leq e^{-\kappa}<1$, as desired. (This is a somewhat refined version of arguments appearing in Vasy~\cite[\S{6.3}]{VasyMicroKerrdS} and, in greater generality, Gannot \cite{GannotHorizons}.)
\end{proof}

\begin{thm}[Construction of the perturbed horizon]
\label{ThmHPert}
  Let $\rho$ be a weight function. Let $g$ be a Lorentzian metric on $M=\R_{t_*}\times X$, $X=\Sph^2\times(r_+-2\delta,r_++2\delta)$, so that $g-g_0\in\rho\CI_b$ on $\{t_*\geq 0\}$, i.e.\ the metric coefficients in stationary frames consisting of $\pa_{t_*}$ and local frames of $T X$ are of class $\rho\CI_b$. Then there exists $t_0$ so that there exists a unique smooth null hypersurface $\cH\subset[t_0,\infty)\times X$ which approaches $\cH_0$ in a $\rho\CI_b$ sense. One can take $t_0=0$ when $\|g-g_0\|_{\rho\cC_b^2}$ is sufficiently small.
\end{thm}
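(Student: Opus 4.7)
The plan is to realize Theorem~\ref{ThmHPert} as a perturbation problem fitting the framework of~\S\ref{SU} on phase space, and then translate the resulting unstable manifold back to a hypersurface in the base. First, I would write $G=g^{-1}$ and note $G-G_0\in\rho\CI_b$ on $\{t_*\geq 0\}$. Since $\pa_r(\rho_\infty^2\sfG_0)=\mu'>0$ along $N_-^*\cH_0$, the implicit function theorem applied to $\sfG=0$ produces $r=r(t_*,\hat\sigma,\hat\eta_\theta,\hat\eta_\phi,\theta,\phi_*)$, a $\rho\CI_b$-perturbation of the stationary $r_0$ from~\S\ref{SH}. Computing the rescaled Hamilton vector field $(2b(r^2+a^2))^{-1}\rho_\infty H_\sfG$ on the perturbed characteristic set in the coordinates $(t_*,\hat\sigma,\theta,\phi_*,\hat\eta_\theta,\hat\eta_\phi)$ exactly as in~\eqref{EqHVf} yields a vector field $V=-\pa_{t_*}+\bar V+W$ on $\cM=\R_{t_*}\times\cX$ with $W\in\rho\CI_b(\cM;T\cX)$; its time-one map takes the form~\eqref{EqIf} for some $\tilde V\in\rho\CI_b$.

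Lemma~\ref{LemmaH1} supplies the normal-sink hypothesis~\eqref{EqINormHyp} for $\bar f$, so Theorem~\ref{ThmRMain} produces a unique smooth $f$-invariant submanifold $\Gamma^{\rm u}\subset\cM\cap\{t_*\geq t_0\}$, realized as the normal graph $(\hat\sigma,\hat\eta_\theta,\hat\eta_\phi)=\Phi(t_*,\theta,\phi_*)$ over $\Gamma_0$ with $\Phi\in\rho\CI_b$. Composing with the implicit solution for $r$ produces a smooth hypersurface $\cH\subset M\cap\{t_*\geq t_0\}$, expressed as the graph $r=r_\cH(t_*,\theta,\phi_*)$ with $r_\cH-r_+\in\rho\CI_b$.

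The main obstacle is showing that $\cH$ is \emph{null}, equivalently that the null covector on $\cH$ selected by $\Gamma^{\rm u}$ is its conormal. I would lift $\Gamma^{\rm u}$ to the $4$-dimensional conic submanifold
\[
  \cL=\bigl\{\bigl((t_*,r_\cH,\theta,\phi_*);\,\xi(\hat\sigma\,\dd t_*+\dd r+\hat\eta_\theta\,\dd\theta+\hat\eta_\phi\,\dd\phi_*)\bigr):(t_*,\theta,\phi_*)\in\R\times\Sph^2,\,\xi<0\bigr\}
\]
of $T^*M\setminus o$. By construction $\cL\subset\Char(G)$, is conic, is $H_G$-invariant (since $V$ is a positive rescaling of $H_G$ on $\Char(G)$), and asymptotes to the Lagrangian $N_-^*\cH_0$ in $\rho\CI_b$ as $t_*\to\infty$. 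I would then argue $\cL$ is isotropic by propagating asymptotic isotropy: for $p\in\cL$ and $v_1,v_2\in T_p\cL$, the pushforwards $\dd e^{sH_G}v_i$ along the backward flow ($s\to-\infty$, hence $t_*\to\infty$) stay bounded, since the dynamics on $\cL$ in the graph parameterization is a $\rho\CI_b$-perturbation of the isometric flow of $\bar V$ on $\Gamma_0$ from Lemma~\ref{LemmaH1}, together with the trivial flows in $t_*$ and in the fiber-radial direction. Asymptotic isotropy $\omega|_{N_-^*\cH_0}=0$ then forces $\omega(\dd e^{sH_G}v_1,\dd e^{sH_G}v_2)=O(\rho(t_*+|s|))\to 0$, which by symplectic invariance of the Hamilton flow equals $\omega(v_1,v_2)|_p$. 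Hence $\omega|_\cL\equiv 0$. Writing $\lambda=\sigma\,\dd t_*+\xi\,\dd r+\eta_\theta\,\dd\theta+\eta_\phi\,\dd\phi_*$ for the tautological $1$-form, direct computation in the parameterization $(t_*,\theta,\phi_*,\xi)$ of $\cL$ gives
\[
  \lambda|_\cL=\xi\bigl[(\hat\sigma+\pa_{t_*}r_\cH)\,\dd t_*+(\hat\eta_\theta+\pa_\theta r_\cH)\,\dd\theta+(\hat\eta_\phi+\pa_{\phi_*}r_\cH)\,\dd\phi_*\bigr].
\]
Setting $0=\omega|_\cL=\dd(\lambda|_\cL)$ and separating the $\dd\xi$-components of the resulting identity forces each bracketed coefficient to vanish; this is precisely the conormal condition $\cL=N_-^*\cH$, so $\cH$ is null.

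Uniqueness of $\cH$ among null hypersurfaces that approach $\cH_0$ in $\rho\cC_b^1$ is inherited from the uniqueness clause of Theorem~\ref{ThmRMain}: the conormal bundle of any such candidate yields an $f$-invariant $\rho\cC_b^1$-graph over $\Gamma_0$, which must coincide with $\Gamma^{\rm u}$. Finally, the claim $t_0=0$ for small $\|g-g_0\|_{\rho\cC_b^2}$ follows from the quantitative nature of the graph transform in the proof of Theorem~\ref{ThmRMain}: the required starting time depends monotonically on the size of $\tilde V$ in $\rho\cC_b^2$, which is in turn controlled by $\|g-g_0\|_{\rho\cC_b^2}$.
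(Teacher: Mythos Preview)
Your overall strategy---construct $\Gamma^{\rm u}$ via Theorem~\ref{ThmRMain}, lift to a conic $H_G$-invariant submanifold $\cL\subset\sfG^{-1}(0)$, and then argue that $\cL$ is Lagrangian and hence equal to $N_-^*\cH$---is elegant and genuinely different from the paper's route. However, the isotropy step has a real gap. Your claim that the pushforwards $\dd e^{sH_G}v_i$ along $\cL$ stay bounded as $s\to-\infty$ is not justified for general weight functions: in the graph parameterization the flow is a $\rho\cC_b^\infty$-perturbation of an isometric flow, so its differential satisfies only a Gr\"onwall bound $\|D\Phi_s\|\lesssim\exp\bigl(C\int^{t_*}\rho\bigr)$, which diverges whenever $\int_0^\infty\rho=\infty$, e.g.\ for $\rho(t)=(1+t)^{-\alpha}$ with $\alpha\in(0,1]$. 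The product $\rho(t_*)\,\|D\Phi_s v_1\|\,\|D\Phi_s v_2\|$ then need not tend to zero, and $\omega|_\cL=0$ does not follow. (A plausible repair, which you did not give: since $\cL_{H_G}\lambda=dG$ vanishes on $\cL$, the $1$-form $\lambda|_\cL=\xi\alpha$ itself is invariant under $H_G|_\cL$; unwinding this yields a transport equation $\cL_{W'}\alpha=c\,\alpha$ along generators with $c\approx\mu'(r_+)>0$, so $\|\alpha\|$ grows exponentially in $t_*$, which together with $\|\alpha\|=O(\rho)\to 0$ forces $\alpha\equiv 0$. This uses the positive surface gravity and does not need $\rho$ integrable.) A smaller omission: Theorem~\ref{ThmRMain} gives only invariance under the time-$1$ map $f=e^V$; to get $H_G$ tangent to $\cL$ you must pass through invariance under $e^{V/N}$ for all $N$, as the paper does.

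For comparison, the paper proves nullness by an entirely causal argument. It first notes that $\cH$ is ruled by null-geodesics and that nullness propagates backward along generators from any late slice $\{t_*=t_1\}$. To obtain nullness at $t_*=t_1$, it establishes the decomposition $M\cap\{t_*\geq t_1\}=B\sqcup\cH\sqcup E$ into a trapped region, the candidate horizon, and an exterior, by analyzing the $\omega$-limit sets of future null-geodesics (showing they must land on $SN_+^*\cH_0$), manufacturing an auxiliary weight via Lemma~\ref{LemmaHDec}, and invoking the iterate-convergence clause of Theorem~\ref{ThmRMain} to place any such geodesic on $\cH$. A Lorentzian $T_{z_0}\cH$ would then permit a causal curve from $z_0\in\cH$ into $E$, contradicting disjointness. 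Your symplectic approach, once repaired, would be shorter and avoid this global causal analysis; the paper's argument, on the other hand, yields the decomposition~\eqref{EqHPertDecomp} as a byproduct, which is then reused in Theorem~\ref{ThmHCI}.
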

\begin{proof}
  \pfstep{Construction of $\cH$.} For a subset $A\subset S^*M$, denote by $C(A)\subset T^*M\setminus o$ its conic extension. Let $G(\zeta)=g^{-1}(\zeta,\zeta)$, $\sfG=\varrho^2 G$. For a $t_*$-independent neighborhood $\cU_0$ of $\Gamma_0=\R_{t_*}\times\Gamma$, we have $\rho_\infty H_\sfG t_*<0$ on $\cU_0$ and $\pa_r\sfG>0$ on $C(\cU_0)$. The latter allows us to write $\{\rho_\infty^2\sfG=0\}\cap\cU_0$ as the graph of a smooth function $r\colon\cX\to(r_+-2\delta,r_++2\delta)$ approaching the analogous function $r_+$ for $\rho_\infty^2\sfG_0$ in a $\rho\CI_b$ sense. (See \cite[Lemma~4.6]{HintzPolytrap} for a similar argument.) Write $F\colon\cX\to\cX\times(r_+-2\delta,r_++2\delta)$, $x\mapsto(x,r(x))$, and define $F_0$ analogously using $r_0$ in place of $r$. Then
  \[
    V := F^*\Bigl(\frac{1}{-\rho_\infty H_\sfG t_*}(\rho_\infty H_\sfG)|_{S^*M}\Bigr) \in \cV(\R_{t_*}\times\cX)
  \]
  is a $\rho\CI_b$-perturbation of $V_0$ in~\eqref{EqHVf} and satisfies $V t_*=-1$. By Lemma~\ref{LemmaH1}, we can apply Theorem~\ref{ThmRMain} to its time $1$ flow $f=\phi_V^1$ to obtain a smooth $f$-invariant submanifold
  \[
    \Gamma^{\rm u} \subset [t_0,\infty)\times\cX
  \]
  approaching $\Gamma_0$ in a $\rho\CI_b$ sense. Applying the same result, for $N t_*$ in place of $t_*$, to the time $1/N$ flow of $V$ shows that $\Gamma^{\rm u}$ is also $\phi^{1/N}_V$-invariant; taking $N\to\infty$ implies that $V$ is tangent to $\Gamma^{\rm u}$.

  We now map $\Gamma^{\rm u}$ back into $\{\rho_\infty^2\sfG=0\}$ via $F$. Thus $C(F(\Gamma^{\rm u}))\subset\sfG^{-1}(0)$, and $H_\sfG$ is tangent to $C(F(\Gamma^{\rm u}))$. Increasing $t_0$ if necessary, the base projection $\pi\colon C(F(\Gamma^{\rm u}))\to[t_0,\infty)\times X$ has rank $3$ since this is true for $\pi\colon C(F_0(\Gamma_0))=N^*\cH_0\setminus o\to M$; its image is thus a smooth hypersurface
  \[
    \cH = \pi(C(F(\Gamma^{\rm u}))) \subset \{t_*\geq t_0\} \cap M.
  \]
  By construction, for $z_0\in\cH$ and $0\neq\zeta_0\in T^*_{z_0}M$ with $(z_0,\zeta_0)\in C(F(\Gamma^{\rm u}))$, the $H_\sfG$-integral curve starting at $(z_0,\zeta_0)$ remains in $C(F(\Gamma^{\rm u}))$ as long as it remains in $\{t_*\geq t_0\}$. Therefore, its projection along $\pi$, which is a null-geodesic, remains in $\cH$. We draw two conclusions: first, $T_{z_0}\cH$ is null or Lorentzian, as it contains a null vector $\zeta_0^\sharp$; second,
  \begin{equation}
  \label{EqHPertRule}
    \text{$\cH$ is ruled by null-geodesics.}
  \end{equation}

  \pfstep{$\cH$ is null: reduction.} The main task is to show that $\cH$ (i.e.\ each $T_{z_0}\cH$) is null. It suffices to prove this for $\cH\cap\{t_*\geq t_1\}$ for any $t_1\geq t_0$. We recall the classical argument. First, let $L\in\cV(\cH)$ be a positive multiple of $\zeta_0^\sharp$ at each $z_0\in\cH$ with $(z_0,\zeta_0)\in C(F(\Gamma^{\rm u}))$. Then the integral curves of $L$ are reparameterized past null-geodesics, so $\nabla_L L=c L$, $c\in\CI(\cH)$. For $X\in\CI(\cH\cap\{t_*=t_1\};T\cH)$, define $\tilde X\in\cV(\cH\cap\{t_0\leq t_*\leq t_1\})$, $\tilde X|_{t_*=t_1}=X$, via $\cL_L\tilde X=0$. Then $L g(L,\tilde X)=g(\nabla_L L,\tilde X)+g(L,[L,\tilde X])+\frac12\tilde X g(L,L)=c g(L,\tilde X)$. Since $g(L,\tilde X)=g(L,X)=0$ at $t_*=t_1$, we obtain $g(L,\tilde X)\equiv 0$. Since for $z_0\in\cH\cap\{t_0\leq t_*\leq t_1\}$, the space of $\tilde X|_{z_0}\in T_{z_0}\cH$ for such $X$ is all of $T_{z_0}\cH$, this shows that $T_{z_0}\cH=L^\perp$ is null indeed.

  \pfstep{Decomposition of $M$ in $t_*\geq t_1$.} Fix now $t_1\geq t_0$ so that $\dd r$ is timelike at $r=r_+-\delta$ for $t_*\geq t_1$, and so that at each point $z_0\in M\cap\{t_*\geq t_1\}$ with $r(z_0)=r_++\delta$ there exist future null vectors $W_+,W_0\in T_{z_0}M$, with smooth dependence on $z_0$, so that $\dd r(W_+)>0$, $\dd r(W_0)=0$. (Since such $W_+,W_0$ exist for the metric $g_0$, a time $t_1$ with this property exists.) Define the sets
  \begin{equation}
  \label{EqHPertBE}
  \begin{split}
    B &:= \bigl\{ z\in M\cap\{t_*\geq t_1\} \colon \forall\,\text{maximal future null-geodesics}\ \gamma\colon[0,\bar s)\to M,\ \gamma(0)=z, \\
      &\hspace{15em} \text{we have}\ \ubar r:=\liminf_{s\to\bar s} r(\gamma(s))<r_+ \bigr\}, \\
    E &:= \bigl\{ z\in M\cap\{t_*\geq t_1\} \colon \exists\,\text{future causal curve}\ \gamma\colon[0,\bar s)\to M,\ \gamma(0)=z, \\
      &\hspace{15em} \text{so that}\ \lim_{s\to\bar s} t_*(\gamma(s))=\infty,\ \bar r:=\limsup_{s\to\bar s} r(\gamma(s))>r_+ \bigr\}.
  \end{split}
  \end{equation}
  We shall prove that
  \begin{equation}
  \label{EqHPertDecomp}
    M\cap\{t_*\geq t_1\}=B\sqcup(\cH\cap\{t_*\geq t_1\})\sqcup E.
  \end{equation}
  Given $z\in B$ and a maximal future null-geodesic $\gamma\colon[0,\bar s)\to M$ starting at $z$, suppose that $t_*\circ\gamma$ is bounded; then $\gamma$ must exit $M$, so $\ubar r=r_+-2\delta$. If $t_*\circ\gamma$ is unbounded, we use that $\dd r$ is timelike for $r\leq r'$ for any fixed $r'<r_+$ when $t_*$ is sufficiently large: this forces $r\circ\gamma$ to be monotonically decreasing for sufficiently large $s$, and $\gamma$ exits $M$ through $r=r_+-2\delta$ again. Thus, $B$ is open and disjoint from $E$. Moreover, $B\supset\{r=r_+-\delta,\ t_*\geq t_1\}$ is non-empty.

  Similarly, when $r'>r_+$ is fixed, $r(\gamma(s))\geq r'$, and $t_*(\gamma(s))$ is large enough depending on $r'$, then $\gamma$ can be continued past $\gamma(s)$ as a future timelike curve which reaches $\{r=r_++\delta\}$ since the same is true on $(M,g_0)$. This implies that $E$ is open. Moreover, $E\supset\{r=r_++\delta,\ t_*\geq t_1\}$ is non-empty.

  Let now $z\in(M\cap\{t_*\geq t_1\})\setminus(B\cup E)$. Since $z\notin B$, there exists a maximal future null-geodesic $\gamma$ with $\gamma(0)=z$ and $\ubar r\geq r_+$. Since $z\notin E$, $t_*\circ\gamma$ is bounded or $\bar r\leq r_+$. If $\bar r>r_+$ and $t_*\circ\gamma$ is bounded, then $r\circ\gamma$ must leave every compact subset of $(r_+-2\delta,r_++2\delta)$, which in view of $\ubar r\geq r_+$ forces $r(\gamma(s))\geq r_++\delta$ for some $s\in[0,\bar s)$. Continuing $\gamma$ past $\gamma(s)$ as an integral curve of $W_0$ produces a future causal curve starting at $z$ along which $t_*$ is unbounded and $\bar r=r_++\delta$, in contradiction to $z\notin E$. Therefore, $\bar r\leq r_+$ and thus
  \[
    \lim_{s\to\bar s} r(\gamma(s))=r_+,\quad \lim_{s\to\bar s} t_*(\gamma(s))=\infty.
  \]
  Lift $\gamma$ to phase space as an integral curve of $H_G$ and project it to $S^*M$. Call the resulting curve $\alpha\colon[0,\bar s)\to S^*M$; we reparameterize is so that $(t_*\circ\alpha)'=1$. The projection $[\alpha]$ of $\alpha$ to the quotient $S^*M/\R$ of $S^*M$ by $t_*$-translations remains in a compact set. Denote by $\varpi\in S^*M/\R$ a point in the $\omega$-limit set of $[\alpha]$; it satisfies $r(\varpi)=r_+$. We claim that
  \begin{equation}
  \label{EqHCIvarpi}
    \varpi\in S N_+^*\cH_0/\R
  \end{equation}
  where $N_+^*\cH_0$ is the \emph{future} component, with $\xi>0$, of the spherical conormal bundle of $\cH_0$. Denote by $L^+\subset S^*M\cap\{r\leq r_+\}$ the projection to $S^*M$ of the future component of $\sfG_0^{-1}(0)\cap(T^*M\setminus o)$ in $r\leq r_+$. To show~\eqref{EqHCIvarpi}, note that $\rho_\infty H_{\sfG_0}r<0$ has a strictly negative upper bound in a $t_*$-translation invariant neighborhood $\cU_0\subset S^*M$ of $\{r\leq r_+\}\cap(L^+\setminus\cV)$, where $\cV$ is any fixed $t_*$-translation invariant neighborhood of $S N_+^*\cH_0$. Indeed, in $r<r_+$ we have $H_{\sfG_0}r<0$ along future null-bicharacteristics since $\dd r$ is timelike there due to $\mu(r)<0$, cf.\ \eqref{EqHG0}. By continuity, $H_{\sfG_0}r\leq 0$ in $r=r_+$ in the future characteristic set; but $H_{\sfG_0}r=0$ and $r=r_+$ imply by~\eqref{EqHG0Vf} that $(r^2+a^2)\sigma+a\eta_\phi=0$, hence from $\sfG_0=0$ also $\sC_0=0$ by~\eqref{EqHG0}. This implies $\sigma=\eta_\theta=\eta_\phi=0$, so one is at $N^*\cH_0$. --- We then also have a strict negative upper bound for $\rho_\infty H_\sfG r$ on $\cU_0\cap\sfG^{-1}(0)\cap\{t_*\geq t_2\}$ for sufficiently large $t_2\geq t_1$. If~\eqref{EqHCIvarpi} were false, we could choose $\cV$ to have closure disjoint from $\varpi$; we would then obtain a contradiction to $\ubar r\geq r_+$.

  Having established~\eqref{EqHCIvarpi} for all points in the $\omega$-limit set of $[\alpha]$, we infer that $\alpha$ in fact converges to $S N_+^*\cH_0$, and thus $-\alpha$ (fiberwise minus sign) converges to $S N_-^*\cH_0$ as $t_*\to\infty$; and its image is an integral curve of $\rho_\infty H_\sfG$. Upon shifting the argument of $-\alpha$ by a constant, we thus have $F^{-1}((-\alpha)(s))=(s,\hat\sigma(s),\omega(s),\hat\eta(s))\in\R_{t_*}\times\R_{\hat\sigma}\times(T^*\Sph^2)_{\omega,\hat\eta}$ where $|\hat\sigma(s)|+|\hat\eta(s)|\to 0$ as $s\to\infty$. Using Lemma~\ref{LemmaHDec} below, there exists a weight function $\tilde\rho$ so that $|\hat\sigma(s)|+|\hat\eta(s)|\leq\tilde\rho(s)$. In the application of Theorem~\ref{ThmRMain} in the proof of Theorem~\ref{ThmHPert}, we may then start with $\tilde\Gamma^{\rm u}$ which approaches $\Gamma_0$ in a $\tilde\rho\cC_b^1$ sense and contains the image $A$ of $F^{-1}\circ\alpha$ intersected with $\{t_*\geq t_1\}$. Since $\alpha$ is an integral curve of $H_\sfG$, every iterate $\tilde\Gamma^{\rm u}_n$ in the notation of Theorem~\ref{ThmRMain} contains $A$, and thus so does the limit $\Gamma^{\rm u}$. Therefore, $\gamma([0,\bar s))\subset\cH$, and hence $z=\gamma(0)\in\cH$.

  Let now $\nu\in\CI(M\cap\{t_*\geq t_1\})$ denote a defining function of $\cH$, i.e.\ $\cH=\{\nu=0\}$, $\dd\nu\neq 0$ on $\cH$, and $\pm\nu>0$ at $\{r=r_+\pm\delta\}$. Define the (open) sets $B_\pm:=B\cap\{\pm\nu>0\}$, $E_\pm:=E\cap\{\pm\nu>0\}$. We have $\{\nu<0\}=B_-\sqcup E_-$; since $B_-\neq\emptyset$, we get $E_-=\emptyset$ and $B\supset B_-=\{\nu<0\}$. This implies $E\cap\cH=\emptyset$ since the neighborhood of a point in $E\cap\cH$ would contain a point lying both in $\{\nu<0\}\subset B$ and in $E$. By~\eqref{EqHPertRule}, we have $B\cap\cH=\emptyset$. Finally, $\{\nu>0\}=B_+\sqcup E_+$ with $E_+\neq\emptyset$, and thus $B_+=\emptyset$, $E\supset E_+=\{\nu>0\}$. This gives $B=B_-$, $E=E_+$, and proves~\eqref{EqHPertDecomp}.

  \pfstep{$\cH$ is null: conclusion.} Let $z_0\in\cH\cap\{t_*\geq t_1\}$. If $T_{z_0}\cH\subset T_{z_0}M$ were not a lightlike hyperplane, it would contain a future timelike vector. Since the set of future timelike vectors is open, there would exist a future timelike vector $W\in T_{z_0}M$ pointing into $E$ (i.e.\ $\dd\nu(W)>0$), and hence a causal curve $\gamma$ with $\gamma(0)=z_0$ and $\gamma'(0)=W$ would enter $E$. This would imply $z_0\in E$, contradicting~\eqref{EqHPertDecomp}.

  \pfstep{Uniqueness of $\cH$.} Let $\cH$ be a null hypersurface which approaches $\cH_0$ in a $\rho\cC_b^\infty$ sense. Note that $\xi<0$ on one component of $\{t_*\geq t_0\}\cap(N^*\cH\setminus o)$ when $t_0$ is sufficiently large. Thus $S N_-^*\cH$ approaches $S N_-^*\cH_0$ in a $\rho\CI_b$ sense, and hence $F^{-1}(S N_-^*\cH)$ approaches $\Gamma_0$ in a $\rho\CI_b$ sense. The uniqueness part of Theorem~\ref{ThmRMain} now yields the claim. (This argument also gives $C(F(\Gamma^{\rm u}))=N^*\cH\setminus o$.)

  Finally, the fact that one can take $t_0=0$ when $\|g-g_0\|_{\rho\cC_b^2}$ is sufficiently small, and thus the vector fields $\rho_\infty H_{G_0}|_{S^*M}$ and $\rho_\infty H_G|_{S^*M}$ are close in $\rho\cC_b^1$, follows from an inspection of the proof of \cite[Theorem~2.3]{HintzPolytrap}.
\end{proof}

\begin{lemma}[Weight function]
\label{LemmaHDec}
  Let $d\colon[0,\infty)\to[0,\infty)$ with $\lim_{s\to\infty}d(s)=0$. Then there exists a weight function $\rho$ with $d(s)\leq\rho(s)$ for all $s$.
\end{lemma}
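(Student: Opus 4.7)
The plan is to write $\rho(s) = e^{-w(s)}$ where $w\colon[0,\infty)\to[0,\infty)$ is a smooth nondecreasing function with $w(s)\to\infty$, all derivatives of $w$ uniformly bounded, and $w(s)\leq-\log d(s)$ (interpreted as $+\infty$ when $d(s)=0$). With such a $w$, the conditions in Definition~\ref{DefIWeight} follow immediately: $\rho>0$, $\rho\to 0$, $\rho'=-w'\rho\leq 0$, and Fa{\`a} di Bruno yields $\rho^{(k)}=P_k(w',\dots,w^{(k)})\,\rho$ for a polynomial $P_k$, hence $|\rho^{(k)}|\leq C_k\rho$. Moreover $\rho(s)=e^{-w(s)}\geq e^{\log d(s)}=d(s)$.

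To build $w$, first replace $d$ by the nonincreasing, strictly positive, dominating envelope $\bar d(s):=\sup_{t\geq s}d(t)+e^{-s}$ (finite in the intended application, where $d$ is bounded) and set $u:=-\log\bar d$, which is nondecreasing with $u(s)\to\infty$. Choose $0=T_0<T_1<\cdots$ with $T_{n+1}-T_n\geq 1$ and $u(s)\geq n+1$ for all $s\geq T_n$; this is possible since $u\to\infty$. Define $v\colon\R\to[0,\infty)$ to be the continuous piecewise linear function with $v(s)=0$ for $s\leq 0$ and $v(T_n)=n$. Then $v$ is nondecreasing with $|v'|\leq 1$ almost everywhere, and for $s\in[T_n,T_{n+1}]$ we have $v(s)\leq n+1\leq u(s)$, so $v\leq u$ on $[0,\infty)$. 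Finally, fix $\chi\in\CI_c(\R)$ with $\chi\geq 0$, $\supp\chi\subset[-\tfrac12,\tfrac12]$, $\int\chi=1$, and set $w(s):=(v*\chi)(s-1)$ (extending $v$ by $0$ on $(-\infty,0]$).

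Integration by parts gives $w^{(k)}(s)=\int v'(s-1-t)\chi^{(k-1)}(t)\,dt$ for $k\geq 1$, so $\|w^{(k)}\|_{L^\infty}\leq\|\chi^{(k-1)}\|_{L^1}$, which is finite and independent of $s$. Monotonicity of $v$ gives $w(s)\geq v(s-\tfrac32)\to\infty$ and $w(s)\leq v(s-\tfrac12)\leq v(s)\leq u(s)$, so all properties listed in the first paragraph hold. The only mildly delicate point—the main obstacle in spirit—is the simultaneous control of smoothness and derivative bounds against a completely arbitrary rate of decay of $d$; this is handled by the piecewise linear envelope (which caps the slope of $v$ by $1$) together with the shift $s\mapsto s-1$ in the convolution (which absorbs the smearing and preserves the domination $w\leq u$).
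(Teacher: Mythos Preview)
Your approach is sound and is a genuinely different (and arguably cleaner) execution than the paper's. The paper monotonizes $d$, then defines $\bar d(n)=\max_{0\leq k\leq n}2^{-k}d(n-k)$ on integers to enforce $\bar d(n+1)\geq\tfrac12\bar d(n)$, and finally interpolates smoothly via a transition function. You instead pass to the logarithmic scale $\rho=e^{-w}$, cap the slope of the piecewise linear envelope $v$ by $1$ (the analogue of the paper's halving bound), and mollify. The exponential parameterization makes the verification of $|\rho^{(k)}|\leq C_k\rho$ particularly transparent via Fa{\`a} di Bruno, whereas the paper checks this by hand using $\bar d(n)-\bar d(n+1)\leq\bar d(n)$.

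There is one small slip. With $\bar d(s)=\sup_{t\geq s}d(t)+e^{-s}$ you have $\bar d(0)\geq 1$, so $u(0)=-\log\bar d(0)\leq 0$; hence the stipulation ``$u(s)\geq n+1$ for all $s\geq T_n$'' cannot be met at $n=0$ with $T_0=0$, and the claimed inequality $v\leq u$ fails near $s=0$. Since $w\geq 0$ forces $\rho=e^{-w}\leq 1$, your $\rho$ then cannot dominate $d$ at points where $d(s)>1$. The fix is immediate: drop the requirement $T_0=0$ (choose $T_0$ large enough that $u|_{[T_0,\infty)}\geq 1$, set $v\equiv 0$ on $(-\infty,T_0]$, and proceed as before), then multiply the resulting $\rho$ by a constant $C\geq 1+\sup d$. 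This preserves all the weight-function axioms and yields $C\rho\geq d$ everywhere. (Both your argument and the paper's tacitly assume $\sup d<\infty$; without this the lemma is vacuously false, since any weight function is bounded by its value at $0$.)
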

\begin{proof}
  Replace $d$ by $s\mapsto\sup_{s'\geq s}d(s')$; then $d$ is monotonically decreasing. Define the sequence
  \[
    \bar d(n) := \max\{ 2^{-k}d(n-k) \colon k=0,1,\ldots,n \},\quad n\in\N_0.
  \]
  Then $\bar d(n)\geq d(n)$ for all $n$, and $\bar d(n)$ does not fall off too quickly in that
  \[
    \bar d(n+1) = \max\bigl( d(n+1), \max \{ 2^{-k-1}d(n-k) \colon k=0,\ldots,n \} \bigr) \geq \frac12\bar d(n).
  \]
  Fix $\phi\in\CI([0,1])$ so that $\phi=1$ on $[0,\frac13]$, $\phi'<0$ on $(\frac13,\frac23)$, and $\phi=0$ on $[\frac23,1]$. Set
  \[
    \rho(n+\eta) = 2\bigl(\phi(\eta)\bar d(n) + (1-\phi(\eta))\bar d(n+1)\bigr),\quad n\in\N_0,\ \eta\in[0,1).
  \]
  This is equal to $\bar d(n)$ on $[\max(n-\frac13,0),n+\frac13]$ and thus smooth. Moreover, $\rho(n+\eta)\geq 2\phi(\eta)\bar d(n)+(1-\phi(\eta))\bar d(n)\geq\bar d(n)\geq d(n)\geq d(n+\eta)$, and
  \[
    |\rho^{(k)}(n+\eta)| = 2|\phi^{(k)}(\eta)| \bigl(\bar d(n)-\bar d(n+1)\bigr) \leq 2|\phi^{(k)}(\eta)|\bar d(n) \leq 2\|\phi\|_{\cC^k}\rho(n+\eta).\qedhere
  \]
\end{proof}

While $\cH$ in Theorem~\ref{ThmHPert} is constructed locally, the definition of the event horizon of an asymptotically flat spacetime with future affine complete null infinity $\scri^+$---namely, the boundary of the causal past $J^-(\scri^+)$ of $\scri^+$---is global. We show that when $(M,g_0)$ is subextremal Kerr--Newman, then $\cH$ is the event horizon of $(M,g)$ under mild assumptions on the spacetime. (These assumptions are satisfied by the spacetimes arising in nonlinear stability proofs \cite{DafermosHolzegelRodnianskiTaylorSchwarzschild,KlainermanSzeftelKerr,GiorgiKlainermanSzeftelStability,ShenGCMKerr}.)

\begin{thm}[Smoothness of the event horizon of asymptotically subextremal Kerr--Newman spacetimes]
\label{ThmHCI}
  Let $g_0$ denote a subextremal Kerr--Newman spacetime defined on $M=\R_{t_*}\times X$, $X=[r_+-2\delta,\infty)\times\Sph^2$, with the level sets of $t_*$ spacelike and transversal to the future event horizon $\cH_0=\{r=r_+\}$. Let $g$ be a Lorentzian metric on $[0,\infty)\times X$ so that $g-g_0\in\rho\CI_b$ (in the frame on $\R_{t_*}\times\R^3\supset\R_{t_*}\times X$ given by coordinate vector fields). Assume moreover that there exists a radius $R_0$ so that every point in $\{t_*\geq 0,\ r\geq R_0\}$ can be connected to future null infinity $\scri^+$ by a causal curve. Define $\cH\subset\{t_*\geq t_0\}$ using Theorem~\usref{ThmHPert}, and denote by $\cB$, resp.\ $\cE$ the component of $(M\cap\{t_*\geq t_0\})\setminus\cH$ on which $r$ is bounded, resp.\ unbounded. Then:
 \begin{enumerate}
  \item every future causal curve starting in $\cB$ remains there;
  \item starting at every point in $\cE$, there exists a future causal curve tending to a point at $\scri^+$;
  \item the only future causal curves remaining in $\cH$ are the null-generators of $\cH$.
  \end{enumerate}
  In particular, $\cH$ is the event horizon of $(M,g)$.
\end{thm}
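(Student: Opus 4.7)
My plan is to establish (1), (3), and (2) in turn, then combine them with the $R_0$-hypothesis to identify $\cH$ with the event horizon. Fix a smooth defining function $\nu$ of $\cH$ with $\cB=\{\nu<0\}$ and $\cE=\{\nu>0\}$; comparison with $\cH_0=\{r=r_+\}$ (whose black hole side is $\{r<r_+\}$) lets me take $\nu$ asymptotic to $r-r_+$ up to a positive factor near $\cH$. The key input is that $(\dd\nu)^\sharp|_\cH$ is null (as $\cH$ is null by Theorem~\ref{ThmHPert}) and future-directed; the latter is inherited from $(\dd r)^\sharp|_{\cH_0}=L_0$ being future null, which follows from the positive computation $g_0^{-1}(\dd t_*,\dd r)=b(r_+^2+a^2)/\varrho^2>0$ read off from $\sfG_0$ in~\eqref{EqHG0}. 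Consequently, for any $z\in\cH$ and any future causal $W\in T_z M$,
\[
  \dd\nu(W)=g(W,(\dd\nu)^\sharp)\leq 0,
\]
with equality iff $W$ is a positive multiple of the null generator of $\cH$ at $z$.

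Item (3) is immediate: a future causal curve $\gamma$ staying in $\cH$ has velocity in the null hyperplane $T\cH$, and the only causal vectors in a null hyperplane lie on its unique null line, spanned by the generator; hence $\gamma$ is a reparameterization of a null generator. For (1), the displayed inequality is the standard one-sided barrier property. Explicitly, if $\gamma$ is future causal with $\gamma(0)\in\cB$ and $s_1$ is the first time $\gamma(s_1)\in\cH$, then $(\nu\circ\gamma)'(s_1)\geq 0$ (as $\nu<0$ just before) while $\dd\nu(\gamma'(s_1))\leq 0$, forcing $\dd\nu(\gamma'(s_1))=0$ and thus $\gamma'(s_1)$ parallel to the null generator. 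Combining with (3) and the ruling property~\eqref{EqHPertRule}, the continuation of $\gamma$ past $s_1$ either stays in $\cH$ along this generator or re-enters $\cB$; it cannot enter $\cE$, since that would again require $\dd\nu(\gamma')>0$ at a point of $\cH$, which the inequality forbids.

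For (2), given $z\in\cE$, the definition of $E=\cE$ in~\eqref{EqHPertBE} supplies a future causal curve $\gamma$ from $z$ with $t_*\to\infty$ and $\limsup r>r_+$, hence a time $s^*$ with $r(\gamma(s^*))\geq r_++\delta'$ for some $\delta'>0$ and $t_*(\gamma(s^*))$ arbitrarily large. On $\{r_++\delta'\leq r\leq R_0\}$, the covector $\dd r$ is $g_0$-spacelike (since $\mu>0$), so the $g_0$-future timelike cone meets $\{\dd r>0\}$, yielding a smooth future timelike field $W_+$ with $\dd r(W_+)\geq c>0$ uniformly; by $\rho\CI_b$-closeness this persists for $g$ once $t_*$ is large. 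Flowing along $W_+$ from $\gamma(s^*)$ reaches $\{r=R_0\}$ in finite parameter time, whence the $R_0$-hypothesis supplies a causal extension to $\scri^+$; concatenation of $\gamma|_{[0,s^*]}$, the $W_+$-flow, and this extension yields the desired curve from $z$ to $\scri^+$.

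To assemble: (2) gives $\cE\subset J^-(\scri^+)$; (1) gives $\cB\cap J^-(\scri^+)=\emptyset$ since $\cB$-remaining curves have bounded $r$ and so do not reach $\{r\geq R_0\}$; (3) together with (1) gives $\cH\cap J^-(\scri^+)=\emptyset$, because a future causal curve from $z\in\cH$ either stays in $\cH$ as a null generator (remaining $\rho\CI_b$-close to $\cH_0$, hence not reaching $\scri^+$) or peels off into $\cB$ by the local (1)-argument. Therefore $J^-(\scri^+)\cap\{t_*\geq t_0\}=\cE$ and $\pa J^-(\scri^+)\cap\{t_*\geq t_0\}=\cH$, identifying $\cH$ with the event horizon. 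I expect the main obstacle to be the local analysis in (1) around tangential contact with $\cH$: ensuring that a future causal curve grazing $\cH$ along a null generator cannot subsequently slip into $\cE$. This is exactly where the null ruling~\eqref{EqHPertRule} and the uniqueness of null generators at each point of the null hypersurface $\cH$ are essential.
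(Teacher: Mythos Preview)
Your proof is essentially correct and uses the same core ingredients as the paper: the decomposition $B\sqcup\cH\sqcup E$ established in the proof of Theorem~\ref{ThmHPert} (which you invoke via the identification $\cE=E$), together with the extension of causal curves from $\cE$ to $\{r\geq R_0\}$ for part~(2). The difference lies in how you exclude $\cB\cup\cH$ from $J^-(\scri^+)$. You argue via the null-barrier property of $\cH$ (the sign of $\dd\nu$ on future causal vectors), whereas the paper simply observes that any future causal curve reaching $\scri^+$ has $t_*\to\infty$ and $\limsup r=\infty>r_+$, so its starting point lies in $E$ by the very definition~\eqref{EqHPertBE}; hence $J^-(\scri^+)\subset E$ with no appeal to the geometry of $\cH$ at all. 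This completely sidesteps the tangential-contact issue you correctly flag: your pointwise inequality $\dd\nu(\gamma')\leq 0$ holds only \emph{on} $\cH$, so by itself does not prevent a curve from grazing $\cH$ and then slipping into $\cE$ at higher order. (The standard remedy---replacing $\nu$ near $\cH$ by an optical function $u$ solving $g^{-1}(\dd u,\dd u)=0$ in a full neighborhood, so that $u\circ\gamma$ is genuinely monotone---would close this gap and in fact gives the literal statement of~(1), that curves from $\cB$ never touch $\cH$.) Both routes work; the paper's is shorter, yours makes the one-way-membrane mechanism explicit.
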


An analogous result holds (by the same proof), \emph{mutatis mutandis}, on asymptotically subextremal Kerr--Newman--de~Sitter and --anti de~Sitter spacetimes, with the event horizon defined to be the boundary of the causal past of the conformal boundary.

\begin{rmk}[The radius $R_0$]
\label{RmkHCIR0}
  The existence of $R_0$ is a condition only on the far field behavior of $g$. If $g_0$ has mass parameter $\bhm\in\R$, then this condition is satisfied, for example, when there exists $\delta>0$ so that, in Boyer--Lindquist coordinates and using $F(r):=1-\frac{2\bhm}{r}$, $h:=g-g_0$,
  \begin{equation}
  \label{EqHCIR0}
    h(\pa_t+F\pa_r,\pa_t+F\pa_r)=\cO(r^{-1-\delta}),\ h(\pa_t,\pa_r)=o(1),\ h(\pa_r,\pa_r)=o(1),\quad r\to\infty.
  \end{equation}
  Taking $g_0$ to be the Schwarzschild metric (Kerr--Newman being an $\cO(r^{-2})$ perturbation thereof) and using the tortoise coordinate $r_*$ with $\dd r_*=\frac{\dd r}{F}$, the radial curve $\gamma(s)=(t(s),r_*(s)):=(t+s,R_0+s-a(s))$ with $\dot\gamma(s)=\pa_t+(1-\dot a)\pa_{r_*}=\pa_t+(1-\dot a)F\pa_r$ is timelike if and only if $F^{-1}g_0(\dot\gamma(s),\dot\gamma(s))=-1+(1-\dot a)^2=-2\dot a+\dot a^2<0$, i.e.\ $\dot a\in(0,2)$; and it limits to a point at null infinity if $\lim_{s\to\infty} a(s)<\infty$. A possible choice is thus $\dot a=s^{-1-\delta/2}$; and then $g_0(\dot\gamma(s),\dot\gamma(s))<-s^{-1-\delta/2}$ for large enough $s$. In view of~\eqref{EqHCIR0}, we then have $g(\dot\gamma(s),\dot\gamma(s))<-\frac12 s^{-1-\delta/2}$ for large $s$, as desired.
\end{rmk}

\begin{proof}[Proof of Theorem~\usref{ThmHCI}]
  We define $B,E$ by~\eqref{EqHPertBE} with $t_0$ in place of $t_1$. Given $r'>r_+$, note that on $(M,g_0)$ we can connect any point $z'\in M$ with $r(z')=r'$ to a point with $r\geq R_0$ using a future timelike curve with compact interval of definition; the same thus remains true on $(M,g)$ when, in addition, $t_*(z')$ is sufficiently large. Applying this for $z\in E$ and a future causal curve $\gamma\colon[0,\bar s)\to M$, with $\gamma(0)=z$ and $\bar r=\limsup_{s\to\bar s}r(\gamma(s))>r_+$, to $z'=\gamma(s)$ where $t_*(\gamma(s))$ is large and $r(\gamma(s))\geq r':=\frac{r_++\bar r}{2}$, we conclude that $z$ can be connected to $\scri^+$ by a causal curve. Therefore, $E\subset J^-(\scri^+)\cap\{t_*\geq t_0\}$, whereas $(B\cup\cH)\cap J^-(\scri^+)=\emptyset$. This implies that $\cH$ is the boundary of $J^-(\scri^+)$ in $\{t_*\geq t_0\}$, and $\cB=B$, $\cE=E$. The proof is complete.
\end{proof}

\bibliographystyle{alphaurl}


\end{document}